\documentclass{article}

\usepackage{setspace}

\usepackage[a4paper, portrait, margin=1.1811in]{geometry}
\usepackage[utf8]{inputenc}
\usepackage{amsmath,amsfonts, amssymb,amsthm}
\usepackage{newtxmath}
\usepackage{wrapfig,xcolor,graphicx,caption}
\usepackage{diagbox,wasysym}
\usepackage{multicol,enumitem,booktabs}
\usepackage{authblk}
\usepackage{url}
\usepackage{dsfont}
\usepackage{mathtools}
\usepackage[authoryear]{natbib} 
\usepackage[colorlinks=true,citecolor=blue!50!black,linkcolor=blue!50!black,urlcolor=blue!50!black,pdfauthor=author]{hyperref}
\usepackage[capitalize,nameinlink,noabbrev]{cleveref}

\def\namedlabel#1#2{\begingroup
   \def\@currentlabel{#2}%
   \label{#1}\endgroup
}

\mathtoolsset{showonlyrefs,showmanualtags}

\DeclareMathOperator*{\argmin}{argmin}
\DeclareMathOperator*{\argmax}{argmax}
\DeclareMathOperator*{\argsup}{argsup}

\newcommand{\R}{\mathbb{R}}
\newcommand{\1}{\mathds{1}}
\newcommand{\C}{\mathcal{C}}
\newcommand{\E}{\mathcal{E}}
\renewcommand{\P}{\mathcal{P}}
\renewcommand{\S}{\mathfrak{S}}
\renewcommand{\dim}{m}

\theoremstyle{plain}

\newtheorem{proposition}{Proposition}
\newtheorem{definition}{Definition}

\theoremstyle{remark}
\newtheorem{remark}{Remark}

\title{Deepest voting on rankings}
\author[1]{Jean-Baptiste Aubin}
\author[2]{Antoine Rolland} 
\author[3]{Ioana Gavra}
\author[4]{Irène Gannaz}
\author[4]{Jacques Anderson Kouassi}

\affil[1]{Univ Lyon, INSA Lyon, UJM, UCBL, ECL, ICJ, UMR5208, \protect\\ 69621 Villeurbanne, France}
\affil[2]{Université Lumière Lyon 2, Universite Claude Bernard Lyon 1, ERIC, 69007, Lyon, France}
\affil[3]{Univ Rennes 2, IRMAR - UMR CNRS 6625, F-35000 Rennes, France,}
\affil[4]{ Univ. Grenoble Alpes, CNRS, Grenoble INP, G-SCOP, 38000 Grenoble, France}

\date{February 2026}

\begin{document}

\maketitle

\setlength{\parskip}{12pt} 
\setlength{\parindent}{0pt} 
\onehalfspacing

\section*{Abstract}
This article aims to present a unified framework for ranking-based voting rules based on the use of depth functions on permutations, as a counterpart of deepest voting rules on evaluation introduced in \cite{aubin2022deepest}. It introduces the notion of depth functions, in continuous sets  and in permutation sets, the later using the notion of Fréchet means. Deepest voting procedures are then formally  defined, and some classical voting rules are expressed as deepest voting procedures, using a large variety of distances on the set of permutations. Links are done between the depth functions mathematical properties and some behaviours of the voting rule, such as Neutrality, Anonymity,  Universality, Condorcet winner/loser property and so on.

\section{Introduction}

A voting procedure, or social choice function, is a process that aggregate individual judgments into collective outcomes. In case of uninominal voting, the result should be the election on an unique winner selected in a finite set of candidates. Traditionally, voting theory has been dominated by preference-based (ordinal) methods, where each voter is supposed to ranks candidates by preference, and the social choice function selects a winner or social ranking accordingly. Classic examples include plurality voting or Borda count, which interpret voter rankings to reflect collective preferences and satisfy normative criteria such as majority support or monotonicity. Social choice theory rigorously formalizes these models, exploring their properties and limitations — most famously exemplified by Arrow’s impossibility theorem and related strategic concerns such as the Gibbard–Satterthwaite result on manipulability of ordinal rules.

Another framework is although possible, which concerns evaluation-based (cardinal) voting procedures. Voters are supposed to assign scores or grades to each option, providing richer information about intensity of support than in the case of ranking. Cardinal methods, such as scores or range voting, are more convenient to  capture nuances in voter sentiment, which should improve collective welfare outcomes..

In the context of evaluation-based voting rules, \cite{aubin2022deepest} propose an unified model of social choice function based on the use of the depth function concept. A depth function is a measure of the "centrality" of a point into a scatter-plot. The deepest point  is therefore the most central point of the scatter-plot. Of course there are many different depth functions that lead to possibly different deepest points. For example in a one-dimension space, both the sum of absolute differences or the sum os square differences to the other points can be considered as depth functions (strictly speaking, inverse of depth functions) leading to the median and mean as deepest points.

So roughly speaking, \cite{aubin2022deepest} introduce deepest voting rules, that consider voters as points in the candidates' space, and tries to recover the "most central" voter in the voters set. This specific voter is therefore considered as most representative of the collective evaluations, and the choice of this vote is considered to be the collective choice. The choice of a specific depth function leads to the definition of specific voting rule, including well-known evaluation-based voting rules such as range voting or majority judgment.

The objective of this paper is to generalize the deepest voting approach to ranking-based votes.  \cite{aubin2022deepest} only consider evaluations based settings. We propose in this paper to formalize ranking-based voting rules through the use of depth functions defined on rankings seen as permutations. We use the concept of $p-$Fréchet mean to generalize the concept of depth functions to permutations. We show that some usual procedures (namely Bucklin, Borda, Kemeny, Plurality, Antiplurality) can be written as deepest voting processes. We also analyse some classical voting rules properties that are  satisfied or not by ranking-based deepest voting rules.

Distance rationalization of voting rules is also an axiomatic approach based on the use of distance between rankings to determine the result of a voting process, see for example \cite{Meskanen2008}, \cite{ElkindTARK} and \cite{Elkind2010OnTR}. However the framework is  different. Rationalization of voting rules is based on the idea to compare a global voting profile to the closest unambiguous voting profile (for example unanimous profiles in the early work of  \cite{Nitzan81}), that leads to an unambiguous winner of the election.  For example Dodgson's voting rule is the rationalization of the Condorcet winner with respect to the Kendall distance; how much should one derive of the observed election profile to reach a Condorcet winner? Deepest voting, on the contrary, only measure the inner distances between voters to find the innermost voter, which is the most representative of the election profile.

More recently the Level $r$ Consensus method proposed in  \cite{MahajneNV15}  deals with distance between preference relations w.r.t. an election profiles, and not between voters as in the deepest voting rules.

Formally, we consider a framework with $n$ voters, $\mathcal V = \lbrace 1,\dots n\rbrace$, and $\dim$ candidates $\mathcal C=\lbrace 1,\dots \dim \rbrace $. A voting situation suppose each voter $v$ gives an opinion on candidate $c$, denoted $e_{cv}$. The opinions $(e_{cv})_{c\in\mathcal C}$ take values in $\mathcal E$. In preferences ranking votes, the set $\mathcal E$ is the set of preferences. That is, $\E = \S_\dim$, with $\S_\dim$ the set of permutations of $\{1,\dots, \dim\}$. The idea of evaluation-based framework is to change the set of opinions to evaluations rather than rankings. In such a case, $\mathcal E = \Lambda^\dim$, with $\Lambda$ the set of evaluations of one voter for each candidate. Classical choices are $\Lambda=\lbrace 1,\dots, K \rbrace$ (discrete evaluations) and $\Lambda=[0,M]$ (continuous evaluations). In the following, we will consider that an evaluation decreases with the preference to a candidate. That is, if the voter $v$ prefers the candidate $c_1$ to the candidate $c_2$, then $e_{c_1 v}<e_{c_2 v}$. This ordering being the same as the one on rankings allows to make parallelisms. 

Hence, a voting framework consists in votes $\Phi=(e_{cv})_{v\in\mathcal{V},c\in\mathcal{C}}$. For each voter $v$, the opinion $(e_{cv})_{c\in\mathcal{C}}$ can be seen as the realisation of a $\dim$-multivariate random variable $E$, with $E$ at values in $\E$. This statistical point of view offers an interesting perspective. It allows using the statistical tools to describe the set of votes. In particular, it offers the possibility of finding \emph{centers} of the distribution, thanks to the application of statistical depth functions.

Quoting \cite{ZuoSerfling}: ``Associated with a given distribution $F$ on $\mathbb{R}^\dim$, a depth function is designed to provide a $F$-based center-outward ordering [...] of points $x$ in $\mathbb{R}^\dim$. High depth corresponds to \emph{centrality}, low depth to \textit{outlyingness}''. In other words, a depth function takes high (positive) values at the \emph{center} of a scatter plot and vanishes out of it.  A formal definition will be given in \cref{sec:depths}. A depth function $D$ on $\E$ is a function defined on $\E\times \mathcal{F}$, where $\mathcal{F}$ is the set of probability distributions on $\E$. It has values in $[0,\infty)$, and is built such that for any distribution $F$ on $\E$, the function $D(.,F)$ is maximal at a point which can be considered as the \emph{center} of the distribution $F$. Applied to an observed vote situation, $\Phi\in\E^n$, $D(.,\Phi)$ is maximal at a point which can be considered as the \emph{center} of the empirical distribution related to $\Phi$. Using this approach enables to recover classical \emph{centers} of distributions, such as the mean or the median, but also to define various \emph{centers}, having interesting properties.

 The application of a depth function $D()$ on a voting situation $\Phi\in\E^n$ provides a (possibly fictive) central voter, $v^*$ with opinions $E^*\in\E^\dim$, $E^*=(e^*_{1},\dots e^*_{\dim})^\top$. The preferred candidate of $v^*$ is then the winner of the votes, associated to the depth function. This procedure is defined as deepest voting procedure by  \cite{aubin2022deepest}. We investigate in this paper the case where $\E=\S_\dim$.

The paper is organized as follows. \cref{sec:depths} provides the definitions of depth functions, in continuous sets $\R^\dim$ and in permutation sets $\S_\dim$. The later are in particular based on the notion of Fréchet means and on distances on $\S_m$, that will be recalled in this section. Deepest voting procedures are then defined in \cref{sec:deepest_voting}. \cref{sec:eval} then gives some results with continuous depths, while \cref{sec:ranks} consider depth based on permutations. In particular, some classical voting rules are expressed as deepest voting procedures, and links are done between the depth functions mathematical properties and some behaviours of the voting rule, such as Neutrality, Anonymity or Universality. Proofs are given in the Appendix.

\section{Definition of depth functions}
\label{sec:depths}

In this section, we provide formal definitions of depths functions. We first consider depths functions defined on continuous multivariate sets, on a connected subset of $\R^\dim$. Next we present depth functions on the set of permutations $\S_n$.

\subsection{Depth functions in continuous case}
\label{sec:depths_continu}

We recall here the definition of a depth function on $\R^\dim$. Such a depth function can be applied in any connected set $\E_D\subseteq \R^\dim$. We refer to \cite{mosler2013depth, ZuoSerfling} and references therein for a more detailed overview.

\begin{definition} \label{def:depth_R} Let the mapping $D:\mathbb{R}^\dim \times \mathcal{F} \to \R^+$ be bounded, and satisfying:
\begin{enumerate}[label=(C\arabic*),topsep=0pt]
\item Stability by Permutation. Let $E=(e_1,\dots,e_\dim)$ be a random vector in $\R^\dim$, $x\in\R^\dim$, and $\sigma$ a permutation on $\{1,\dots,\dim\}$. Let $E_\sigma=(e_{\sigma(1)},\dots,e_{\sigma(\dim)})$ and $x_\sigma=(x_{\sigma(1)},\dots,x_{\sigma(\dim)})$. Then $D(x_\sigma,F_{E_\sigma})=D(x, F_E)$.
\label{ass:depth0}

\item Affine Invariance. For all $a\in\R$, $b\in\R^\dim$, for any random vector $E \in \mathbb{R}^\dim$, $\argmax_{x\in\R^\dim} D(a\, x+b,F_{a\,E+b})= \argmax_{x\in\R^\dim} D(x,F_{E})$, where $F_X$ denotes the distribution of the random variable $X$.
\label{ass:depth1}

\item Maximality at center. For a distribution $F \in \mathcal{F}$ having a uniquely defined \emph{center} $\theta$ (\emph{e.g.} the point of \emph{symmetry}), $D(\theta,F)=\sup_{x \in \mathbb{R}^\dim} D(x,F)$.
\label{ass:depth2} 

\item Quasi-concavity. For any $F \in \mathcal{F}$, $D(\cdot,F)$ is quasi-concave. That is, if $\theta\in\argmax_{x\in\R^\dim} D(x,F)$, then $D(x,F) \leq D(\theta + \lambda (x-\theta), F)$ for any $0 \leq \lambda \leq 1$. \label{ass:depth3}

\item Vanishing at Infinity. Let $\lVert .\rVert$ denote the euclidean norm on $\R^\dim$. Then, $D(x,F) \rightarrow 0$ as $ \|x\| \rightarrow \infty$ for each  $F \in \mathcal{F}$. \label{ass:depth4}

\item Let $F \in \mathcal{F}$ be a distribution on $\R^\dim$ with marginal distributions $F_1,\dots, F_\dim$. Suppose that for $c\in\{1,\dots,\dim\}$, $F_c$ has support containing a unique point $\{\alpha\}$. Then for all $x^*\in\argsup_{x \in \mathbb{R}^\dim} D(x,F)$, the $c^\text{th}$ coordinate of $x^*$ is $x^*_c=\alpha$. \label{ass:depth5}

\end{enumerate}
Then $D(\cdot,\cdot)$ is called a {\bf statistical depth function}.

\end{definition}

We refer to \cite{aubin2022deepest} for a discussion on the assumptions. Note that the notion of symmetry in \ref{ass:depth2} is (voluntarily) not precisely fixed, and various notions of symmetry are possible such as, from the most constraining to the weakest, central symmetry, angular symmetry and halfspace symmetry. 

In our context, we will apply the depth functions on empirical distribution. Consider a voting situation with $n$ voters and $\dim$ candidates, $\mathcal C=\lbrace 1,\dots \dim\rbrace$, with an opinion matrix $\Phi\in\E^n$. Let $D()$ be a depth function defined on $\E_D\times \mathcal{F}$. In the following, we will denote $D(.,\Phi)$ the depth function $D(x,F_\Phi)$ where $F_\Phi$ is the empirical distribution of the voter's opinions in $\E$, $\{\Phi(.,v)=(e_{cv})_{c\in\{1,\dots,\dim\}},\, v=1,\dots,n\}$.

Let $x\in \mathbb{R}^\dim$ and $\Phi(.,1),\ldots, \Phi(.,n)$ in $\mathbb{R}^\dim$. Examples of depth functions satisfying \cref{def:depth_R} are:
\begin{description}[topsep=0pt]
\item[The weighted $L^q$ depths.] \citep{Zuo2004, mosler2013depth} The weighted $L^q$ depth is defined by
$$ w L^q D(x,\Phi)=\frac{1}{1+\frac{1}{n} \sum_{v=1}^n \omega( \|\Phi(.,v)-x\|_q)},  $$
where $q>0$, $\omega$ is a non-decreasing and continuous function on $[0,\infty)$ with $\omega(\infty)=\infty$
 and $\| x-x' \|_q =\left(\sum_{c=1}^\dim |x_c-x'_c|^q\right)^{1/q}$. 
 
If $\omega: x \mapsto x^q$, then 
\begin{equation}
\label{eqn:wLp} L^q D(x,\Phi):= \frac{1}{1+\frac{1}{n} \sum_{v=1}^n   \sum_{c=1}^\dim |\Phi(c,v)-x_{c}|^q}
\end{equation}
will be called a $L^q$ depth.

If $q=\infty$, the definition can be extended to
$L^\infty D(x,\Phi):= 1/\bigl(1+\frac{1}{n} \sum_{v=1}^n   \max_{c=1,\dots \dim} |\Phi(c,v)-x_{c}|\bigr).$

\item[The halfspace depth.] \citep{Tukey}  The halfspace depth is defined by 
\begin{equation*} HD(x,\Phi):= \textrm{ minimum proportion of voters in a halfspace $H$ including } x.\end{equation*}

\item[The projection depth] \citep{Zuo2003} The projection depth is defined by 
$$ PD(x, \Phi):= \inf_{u\in\R^\dim,\,\lVert u\rVert =1} \frac{1}{1+\lvert u^\top x -\mu(F_u)\rvert/\sigma(F_u)},$$
where $\lVert\cdot\rVert$ denotes the euclidean norm,, $\mu(F)$ denotes a central statistic of a distribution $F$ and $\sigma(F)$ a dispersion statistic. $F_u$ is the empirical distribution of $u^\top \Phi$. Typically, $\mu(\cdot)$ is the median and $\sigma(\cdot)$ is the median absolute deviation.
\end{description}

Note that following $L^q$-depths, a depth function can be built from a distance $d()$ on $\R^\dim$, considering $D(x,\Phi_n):= 1/\bigl({1+d(x_c,\Phi(c,v))}\bigr).$

\subsection{Depth functions on permutations}
\label{sec:depths_permut}

Our objective is to extend the deepest voting framework, defined in evaluation-based framework by \cite{aubin2022deepest}, to voting methods based on preference rankings. For preference rankings, opinions belong to the set $\E=\S_m$. We could consider depth functions defined on $\E_D=[0,\dim]^\dim$, since $\E\subset\E_D$, but it seems appropriate to explore depth functions defined on $\E_D=\S_\dim$. To achieve this goal, we want to define depth functions on permutation sets.

Before defining depth function, let us recall some definitions of distances on $\S_\dim$.

\subsubsection{Distances on permutations}\label{sec:dist_perm}

Various distances on the set $\S_\dim$ have been defined in the literature (see \cite{dist.permut} and  \cite{deza2016encyclopedia} for complete reviews). Especially, for $\sigma, \tau\in\S_\dim$, examples of distances are
\begin{description}[topsep=0pt]
    \item[Kendall's distance]:
    \begin{equation}
    \label{eq:kendall}
        d_K(\sigma,\tau)=\sum_{c=1}^{\dim-1}\sum_{c'=c+1}^{\dim}\1\{(\sigma(c)-\sigma(c'))\,(\tau(c')-\tau(c))< 0\}.
    \end{equation}
    
    \item[Hamming distance]: $$d_H(\sigma,\tau)=\sum_{c=1}^\dim\1\{\sigma(c)\neq\tau(c)\}.$$
    
    \item[Cayley metric] (or transposition distance): $d_C(\sigma,\tau)$ is the minimum number of transpositions needed to obtain $\sigma$ from $\tau$ (cf \cite{Diaconis88}),
$$d_C(\sigma, \tau)=\dim-\#\mbox{cycles in}(\sigma \circ \tau^{-1}),$$ where $\sigma\circ s$ denotes the successive applications of $s$ and $\sigma$: for all $c\in \{1,\dots,\dim\}$, $\sigma\circ s(c)=\sigma(s(c))$.

    \item[Minkowski-Hölder distances]:  
    \begin{align} \label{eq:minkowski}
\forall q\in[1,+\infty),      d_{q}(\sigma,\tau) &=\Bigl(\sum_{c=1}^\dim|\sigma(c)-\tau(c)|^{q}\Bigr)^{\frac{1}{q}},
     \\ 
     d_{\infty}(\sigma,\tau)& =\underset{c \in \{1, \ldots, \dim\}}{\max}\bigl\lvert\sigma(c)-\tau(c)\bigr\rvert.
  \end{align}

Among Minkowski-Hölder distances, two special cases are when $q=1$ and $q=2$:
 \begin{description}
    \item[when $q=1$], Minkowski-Hölder distance is referred to as Spearman footrule,
     \item[when $q=2$], Minkowski-Hölder distance is referred to as Spearman $\rho$ distance.%
     \end{description}
      \end{description}

\subsubsection{Weighted distances on permutations}
\label{sec:weights}

When measuring the discrepancy between two ranking lists in the context of voting, one might want to take some additional information into account: for example the positions on which the permutations differ or, if available, a similarity measure between the candidates. Some of the classical distances presented in \cref{sec:dist_perm} can be slightly modified by incorporating weights that balance the information in the distance. For instance, one could want to give more weight to the first ranked candidate with respect to the others, estimating the first position is more emblematic. Or one could want to build a distance where the weight of the difference between two given ranks increases with the difference. 
The idea is to introduce weights in the distance which will depends on the two ranks which are compared \cite{weighted_distances}. The weights write as $\lbrace w_{r,r'}, \; r,r'\in\lbrace 1,\dots, \dim\rbrace\rbrace$.

Applying this to Hamming distance and Minkowski-Holder type distance, we obtain, for $\sigma,\tau\in\S_\dim$,
\begin{itemize}[topsep=0pt]
    \item Weighted Hamming distance: 
\begin{equation}\label{eq:hamming_weighted}
d_{wH}(\sigma,\tau)=\sum_{c=1}^\dim w_{(\sigma(c),\tau(c))} \1\{\sigma(c)\neq\tau(c)\},
\end{equation}
    \item Weighted Minkowski-Holder type distance, for $q\in [1,+\infty)$:
   \begin{equation}\label{eq:minkowski_weighted}
   d_{wq}(\sigma,\tau)=\Bigl(\sum_{c=1}^\dim w_{(\sigma(c),\tau(c))}(|\sigma(c)-\tau(c)|)^{q}\Bigr)^{\frac{1}{q}}.\end{equation}
\end{itemize}
The weight $w_{r,r'}>0$ for each positions $r, r' \in \{1,\ldots,\dim\}$ balance the importance of swap, that is of moving from position $r$ to position $r'$ in the ranking.

While $d_{w_rH}()$ is indeed a distance on $\S_\dim$, in all generality, this is no longer the case for the functions $d_{wH}()$ and $d_{wq}()$ introduced above. In particular the triangle inequality does not necessarily hold. If the weights are symmetric, in the sense that $w_{r,r'}=w_{r',r}$, for all  $r, r' \in \{1,\ldots,\dim\}$, then the two dissimilarity functions are semi-metrics (they satisfy all the axioms of a distance except the triangle inequality). However, a so called $\rho$-relaxed triangle inequality does still hold and we have
$$d_{w}(\sigma_1,\sigma_2)\le \rho(d_{w}(\sigma_1,\sigma_3)+d_{w}(\sigma_3,\sigma_2)),$$
with $\rho=\max_{r_a,r_b,r_c}\frac{w_{r_a,r_b}}{w_{r_a,r_c}}$, and $d_{w}()$ equals $d_{wH}()$ or $d_{wp}()$.

\subsubsection{Depth functions associated to a distance}

Let $\P_\dim$ denote the set of probability distributions on $\S_\dim$.
Given a distance on $\S_\dim$, \cite{Goibert}  proposed the following definition of a depth function on permutations. 

\begin{definition}[\cite{Goibert}]
\label{def:Goibert_depths}
    Let $d()$ be a distance on the set $\S_\dim$ of permutations. 
    $D: \S_\dim\times \P_\dim \mapsto\R^+$ is a depth function if it satisfies the following properties:
    \begin{enumerate}[label={(P\arabic*)},topsep=0pt]
    \item Invariance. For all $s\in\S_\dim$ and $\Pi\in\P_\dim$, define $\Pi_{s}$ the probability distribution such that $\Pi_{s}(\sigma):=\Pi(\sigma\circ s^{-1})$ for all $\sigma\in\S_\dim$. 
    $D$ is said to be invariant if and only if, for all $\sigma, s\in\S_\dim$, for all $\Pi\in\P_\dim$,  $D(\sigma,\Pi)=D(\sigma\circ s,\Pi_s)$. 
    \label{ass:invariance}
    
    \item Maximality at Center. For a distribution $\Pi$ with a center of symmetry (defined later), the function $D(.,\Pi)$ reaches its maximum at this center. 
    \label{ass:maximality}
    
    \item Local Monotonicity. Let $\Pi\in\P_\dim$ and define $\sigma^*=\argmax_{\sigma\in\S_\dim} D(\sigma,\Pi)$. Suppose that the deepest permutation $\sigma^*$ is unique. Then for any $\tau, \sigma \in\S_\dim$ such that $d(\sigma^*,\sigma \circ \tau)=d(\sigma^*,\sigma)+1$, we have $D(\sigma\circ \tau,\Pi)\leq D(\sigma,\Pi)$.
    \label{ass:local_mono}
    
    \item Global Monotonicity. Let $\Pi\in\P_\dim$ and define $\sigma^*=\argmax_{\sigma\in\S_\dim} D(\sigma,\Pi)$. Suppose that the deepest permutation $\sigma^*$ is unique. Then we have $d(\sigma^*,\sigma)\leq d(\sigma^*,\tau) \ \Rightarrow \ D(\tau,\Pi) \leq D(\sigma,\Pi)$.
    \label{ass:global_mono}
    
    \end{enumerate}
\end{definition}

The properties are quite similar to \cref{def:depth_R}. First, \ref{ass:depth1} does not apply in the case of permutations. Indeed, affine transforms are undefined on the set $\S_m$. Also, since the set is countable, \ref{ass:depth4} does not apply either.
Next \ref{ass:invariance} is similar to \ref{ass:depth0}, and \ref{ass:local_mono} and \ref{ass:global_mono} correspond to \ref{ass:depth3}. 

Finally \ref{ass:maximality}, as  \ref{ass:depth2} in the continuous case, states that the depth function is maximal at a \emph{natural} center if such a center exists. \cite{Goibert} define a H-center (in Proposition 11), which corresponds to a halfspace symmetry. The authors also define a M-center (in Definition 6), based on the distance $d()$ on $\S_m$. We refer to the later for precise definitions and discussions.

These properties allow the depth function to be indeed maximal at the \emph{center} of a sample of permutations, and to behave similarly to a depth function defined on $\R^\dim$.

Depth functions in \cref{def:Goibert_depths} are highly related to Fréchet means.
\cite{Frechet} introduces the notion of typical position of order $p$, $p\geq 1$,
for distributions on general metric spaces as an extension of the classical moments of order $p$ associated with distributions on Euclidean spaces. This is known as $p$-Fréchet mean. In the case of permutations, the function associated to the $p$-Fréchet mean can be seen as a depth function on the metric space  $(\mathfrak{S}_m,d)$ as shown below.

\begin{definition}\label{def:Frechet_mean}
    Let $(\S_\dim,d)$ be a metric space endowed with a probability measure $\Pi$. For $p\geq 1$, the $p$-Fréchet mean on $(\S_\dim,d)$ with respect to $\Pi$ is defined as
    \begin{equation*}
        \argmin_{\sigma \in \S_\dim} U_{d,\Pi,p}(\sigma), \mbox{  with  } U_{d,\Pi,p}(\sigma) = \mathbb{E}_{S\sim\Pi}\left[ d^p(\sigma,S)\right].
    \end{equation*}
\end{definition}

Based on this notion, \cite{Goibert} defined some ranking depth functions, associated to metrics on $\S_\dim$.

\begin{definition}[\cite{Goibert}]
\label{def:depth_Goibert}
 Let $\S_{\dim}$ be the set of permutations on $\{1,\dots,\dim\}$. Let $d()$ be a distance on $\S_{\dim}$ and $p\in\R$, $p\geq 1$ a given parameter. The depth function associated to $(d,p)$ is defined as follows: $\forall \sigma \in \S_{\dim}$, for all $\Pi\in\P_\dim$,
 $$D(\sigma,\Pi)=\lVert d^p\rVert_{\infty}-\mathbb{E}_{S\sim\Pi}[d^p(\sigma,S)]=\lVert d^p\rVert_{\infty}- U_{d, \Pi,p}(\sigma),$$ 
 with $\lVert d^p\rVert_{\infty}=\max_{\sigma, \tau \in \S_\dim}d^p(\sigma, \tau)$, $S$ a random variable of law $\Pi$, and $U()$ is defined in \cref{def:Frechet_mean}.
\end{definition}

 The authors showed that, under sufficient conditions on the distribution $\Pi$, the depth function given in the definition above associated to the Kendall distance (see \eqref{eq:kendall}) satisfies the properties of \cref{def:Goibert_depths} for any $p\geq 1$. They also proved that \ref{ass:invariance} and \ref{ass:maximality}  hold when using Spearman footrule or Spearman $\rho$-distance (\eqref{eq:minkowski} with respectively $q=1$ and $q=2$). 
 
The deepest point can then be defined as the permutation maximizing the depth function. \cite{Goibert} introduce the \emph{consensus ranking} and the \emph{barycenter ranking}, which corresponds to the deepest points obtained respectively with $p=1$ and with $p=2$ in \cref{def:depth_Goibert}. We will focus on these two cases in the following. They are related to the $p$-Fréchet means as follows.

\begin{definition}[\cite{Goibert}]
\label{def:consensus_barycenter}
Let $(\S_\dim,d)$ be a metric space endowed with a probability measure $\Pi$. Denote $D_1()$ and $D_2()$ the depth functions associated respectively to $(d,1)$ and to $(d,2)$ as defined in \cref{def:depth_Goibert}. Then,
\begin{enumerate}[label=(\roman*),start=1,topsep=0pt]
\item a \emph{consensus ranking} $\sigma^* \in \S_\dim$ is a permutation such that $$D_1(\sigma^*,\Pi)=\max_{\sigma \in \S_\dim} D_1(\sigma,\Pi) = \lVert d\rVert_{\infty}-\min_{\sigma \in \S_\dim} U_{d, \Pi,1}(\sigma);$$
\item a \emph{barycenter ranking} $\sigma^* \in \S_\dim$ is a permutation such that
 $$D_2(\sigma^*,\Pi)=\max_{\sigma \in \S_\dim} D_2(\sigma,\Pi) = \lVert d^2\rVert_{\infty}-\min_{\sigma \in \S_\dim} U_{d, \Pi,2}(\sigma).$$
\end{enumerate}
\end{definition}

\begin{remark}
Note that $\mathbb{E}_\Pi[d(\sigma,\mathcal{S})]= U_{d,\Pi,1}(\sigma)$ means that, for any distance $d()$ and any probability distribution $\Pi$ on $\S_\dim$, the ranking given by maximizing the depth function $D_1(\sigma,\Pi)$ can be retrieved by minimizing the functional associated to the $1$-Fréchet mean on the same space. 
In particular, the consensus ranking given by the depth function $D_1(\sigma,\Pi)$ is a 1-Fréchet mean on $(\S_\dim,d,\Pi)$ and can be seen as a median ranking on $\S_\dim$. 
\end{remark}

\begin{remark}
Using the terminology introduced by \cite{ZuoSerfling},  $U_{d,\Pi,p}$ corresponds to the inverse of a \textit{Type B depth function}.
\end{remark}

Similarly to the continuous case, we will apply the depth functions on the empirical distribution of voting opinions. Consider a voting situation with $n$ voters and $\dim$ candidates, with an opinions matrix $\Phi\in\E^n$. If the opinions are rankings, $\Phi$ writes as $\Phi=(\sigma_v)_{v=1,\dots,n}$, and $\E=\S_\dim$. Let $D()$ be a depth function defined on $\E_D\times \P_\dim$. In the following, we will denote $D(.,\Phi)$ the depth function $D(x,F_\Phi)$ where $F_\Phi$ is the empirical distribution of the voter's opinions in $\E=\S_\dim$, $\{\Phi(.,v)=(e_{cv})_{c\in\{1,\dots,\dim\}},\, v=1,\dots,n\}$.

\section{Deepest voting}
\label{sec:deepest_voting}

The depth functions being defined, deepest voting procedures can be built as follows.

\begin{definition}[Deepest Voting] 
\label{def:deepest_voting}
Consider a voting situation with $n$ voters and $\dim$ candidates, $\C=\lbrace 1,\dots \dim\rbrace$, with an opinion matrix $\Phi\in\E^n$. Let $D(.,.)$ be a depth function defined on $\E_D\times \mathcal{F}$, where $\E\subseteq\E_D\subseteq \R^\dim$ and $\mathcal{F}$ is the set of probability distributions on $\E_D$. Denote 
\[\E^*_D:= \{E\in\R^\dim: D(E,\Phi)=\sup(D(.,\Phi)) \}\]
the set of deepest points (opinions) with respect to $\Phi$.\\
Let \[c_D^*:=\argmin_{c \in\mathcal{C}}  \{e^*_{D,c},~E^*_D=(e^*_{D,1},\hdots,e^*_{D,\dim})^\top\in\E_D^\ast\}.
\]
The deepest voting process with respect to the depth $D()$ is defined as the function which maps $\{\Phi(c,v),\; c=1,\dots,\dim,\; v=1,\dots,n\}$ to $c_D^*\subseteq \mathcal{C}$.\\
If $c_D^*$ is unique, then the winner of the election is the candidate $c_D^*$. If $c^*_D$ is not unique, there is no unique winner of the election.
\end{definition}

\cite{aubin2022deepest} consider the framework of evaluations. In such a case, the set of opinions $\E$ satisfies $\E=\Lambda^\dim$, with $\Lambda=\lbrace 0, \dots, K\rbrace$ or $\Lambda=[0,M]$. In that cases, the authors considered depth functions defined on $\E_D=\R^\dim$. 
Statistical depths functions are well defined in such spaces, which correspond to $\dim$-multivariate continuous distributions, as recalled in \cref{sec:depths_continu}. Some results in this framework are given in \cref{sec:eval}. The relation between usual voting procedures and given depth functions will be displayed. Note that in the discrete case depth functions defined on $\E_D=\mathbb{N}^\dim$ could have been considered, but, to our knowledge, such depth functions are not defined in literature.

The most common voting methods are based on preference rankings. This includes majority voting systems as well as those based on the Condorcet or Borda principles, among others. A review of these voting methods can be found in \cite{Felsenthal}. In that cases, the set of opinions $\E$ satisfies $\E=\S_\dim$, with $\S_\dim$ the set of permutations of $\{1,\dots, \dim\}$. The first possibility is to consider depth functions defined on $\E_D=[0,\dim]^\dim$, which indeed includes $\E=\S_\dim$. Usual depth functions used in the evaluation-based framework can then be used. An alternative is to consider depth functions defined on $\E_D=\E=\S_\dim$, as defined in \cref{sec:depths_permut}. This choice respects more the nature of the data. \cref{sec:ranks} show that some usual procedures can be written as permutation deepest voting. Relations between voting procedures and permutation depth functions are studied in \cref{sec:properties}.

\section{Continuous deepest voting}
\label{sec:eval}

Let us consider first evaluation-based voting. Recall that in such a case, the voters give opinions on candidates on the form of evaluation, that is, for each voter $v$ and candidate $c$, the opinion is $e_{cv}\in\Lambda$, with $\Lambda=\{1,\dots,K\}$ or $\Lambda=[0,M]$. Hence, each voter has a vector of opinions in $\E=\Lambda^\dim$. With $n$ voters and $\dim$ candidates, the opinion matrix, denoted $\Phi$, belongs to $\E^n$. In that case, we consider depth functions defined on $\R^\dim$, with properties given in \cref{def:depth_R}. For a given depth function $D()$, the associated deepest voting rule is obtained by maximizing $D(.,\Phi)$, as described in \cref{def:deepest_voting}.

In this context, \cite{aubin2022deepest} established parallelism between conditions on the depth functions and usual axioms on voting procedures. The authors prove that a voting procedure defined in \cref{def:deepest_voting} with a depth function defined in \cref{def:depth_R} satisfies the properties of Neutrality, Universality and Unanimity. They also study Monotonicity for given depths functions, and they provide a condition on the depth function to satisfy Independence to Irrelevant alternative.

Next, \cite{aubin2022deepest} showed that classical evaluation-based voting rules can be expressed as deepest voting procedures with $L^q$ depths, $q\geq 1$. If evaluation grades are in $\Lambda=\{0,1\}$, as seen in \cite{Dort.et.al}, for all $q\geq 1$, $L^q$ depth voting is equivalent to approval voting \citep{Brams}. If evaluation grades are in $\Lambda=[0,1]$, $L^q$ deepest voting with $q=1$ and $q=2$ recovers respectively majority judgment \citep{Balinski2007} and range voting \citep{Smith}. These results are summarized in Table \ref{tab:deepest_continuous}.

\begin{table}[!ht]
\centering
\begin{tabular}{ll}
\toprule
    \textbf{Depth} & \textbf{Voting process on} [0,1] \\ \midrule
    $L^1$ & Vote to the highest median (majority judgment) \\ 
    $L^2$ & Vote to the highest mean (range voting) \\ \hline 
    & \\ 
    \toprule
    \textbf{Depth} & \textbf{Voting process on} \{0,1\} \\ \midrule
    $L^q$, $q\geq 1$ & Approval voting \\ \hline 
        & \\ 
    \toprule 
    \textbf{Depth} & \textbf{Voting process on} $\S_\dim$ \\ \midrule
    $L^1$ & Bucklin's voting \\ 
    $L^2$ & Borda's voting \\ \bottomrule
\end{tabular}
\caption{(Continuous) $L^q$ deepest voting rules.}
\label{tab:deepest_continuous}
\end{table}

Considering ranking-based procedure, we can also write some procedures as continuous deepest voting process. Indeed, let us consider a framework where, for each voter $v$, the opinions on the candidates are a vector $(e_{cv})_{c=1,\dots,\dim}$ which is a permutation of $\{1,\dots,\dim\}$. The opinions belong to $\E=\S_\dim$, the set of permutations of $\{1,\dots,\dim\}$, which is included in $\{1,\dots,\dim\}^\dim$. Consequently, we can see the rankings as evaluations and apply depth functions defined on $\R^\dim$. We can recover existing voting rules.

In particular, let consider Borda's procedure 
 and Bucklin's procedure. 
\cite{McCabe2006} defines the Bucklin's voting rule (also called Majoritarian Compromise) winner as the candidate with the lowest median rank.
We can prove that these social choice functions are related respectively to $L^2$ and $L^1$ deepest voting.

\begin{proposition}\label{prop:borda_bucklin_continuous}
Consider $n$ voters and $\dim$ candidates. Let $e_{cv}$ be the rank in $\{1,\dots,\dim\}$ given by voter $v$ to candidate $c$, and denote $\Phi=(e_{cv})_{v,c}\in\S_\dim$ the obtained ranks. Then,
\begin{enumerate}[label=(\roman*),topsep=0pt]
    \item Borda's procedure is the (continuous) $L^2$-deepest voting applied on $\Phi$;
    \item Bucklin's winner(s) is included in the (continuous) $L^1$-deepest voting set on $\Phi$, $c_D^*$ defined in \cref{def:deepest_voting}. When the winner is unique Bucklin's winner coincides with the winner of $L^1$ deepest voting.
\end{enumerate}
\end{proposition}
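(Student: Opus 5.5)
The plan is to use the fact that the $L^q$ depth in \eqref{eqn:wLp} is a decreasing function of a criterion that splits into one term per candidate. Indeed,
\[
\frac{1}{n}\sum_{v=1}^n\sum_{c=1}^\dim |e_{cv}-x_c|^q=\sum_{c=1}^\dim \frac{1}{n}\sum_{v=1}^n |e_{cv}-x_c|^q ,
\]
and the $c$-th term depends only on the coordinate $x_c$. So maximizing $L^qD(\cdot,\Phi)$ over $x\in\R^\dim$ reduces to minimizing each one-dimensional function $g_c(x_c)=\frac1n\sum_v |e_{cv}-x_c|^q$ separately. The set of deepest points $\E_D^*$ is therefore the Cartesian product of the coordinatewise minimizer sets. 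This lets us describe $\E_D^*$ explicitly, and then we identify $c_D^*$ from \cref{def:deepest_voting} as the candidate(s) with the smallest deepest coordinate. Recall that low values mean preferred, so the candidate with the smallest coordinate wins.

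For (i), take $q=2$. Each $g_c$ is strictly convex, so its unique minimizer is the mean rank $\bar e_c=\frac1n\sum_v e_{cv}$. Hence $\E_D^*=\{(\bar e_1,\dots,\bar e_\dim)\}$ and $c_D^*=\argmin_c \bar e_c$. The Borda score of $c$ is $\sum_v(\dim-e_{cv})=n\dim-n\bar e_c$, which is a decreasing affine function of $\bar e_c$. So $\argmax$ of the Borda score equals $\argmin_c\bar e_c$, ties included. This gives equality of the two procedures.

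For (ii), take $q=1$. The minimizer set of $g_c$ is the median interval $[\ell_c,u_c]$. When $n$ is odd it is a single point; when $n$ is even, $\ell_c$ and $u_c$ are the $(n/2)$-th and $(n/2+1)$-th order statistics of $(e_{cv})_v$. Since $\E_D^*=\prod_c[\ell_c,u_c]$, the set $c_D^*$ consists of every candidate $c$ that attains the minimum of the first coordinate over some point of $\E_D^*$. The key step is to show that this set equals
\[
\bigl\{c:\ \ell_c\le \min_{c'} u_{c'}\bigr\}.
\]
If $\ell_c\le u_{c'}$ for all $c'$, we pick $x_c=\ell_c$ and $x_{c'}=u_{c'}$ for all $c'\neq c$, so $c$ attains the minimum. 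Conversely, if $\ell_c>u_{c'}$ for some $c'$, then $x_c>x_{c'}$ at every deepest point, so $c$ never attains the minimum. Bucklin's winner is the candidate with the lowest median rank, taking the median to be $\ell_c$ (the lower median, the first rank at which a majority is reached). Let $c_B$ be a Bucklin winner, so $\ell_{c_B}=\min_{c'}\ell_{c'}$. Then $\ell_{c_B}\le \ell_{c'}\le u_{c'}$ for every $c'$, so $c_B\in c_D^*$, which proves the inclusion.

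For the last claim, suppose $c_D^*=\{c_0\}$. Since every Bucklin winner lies in $c_D^*$ and Bucklin's rule always produces at least one winner, the Bucklin winner is $c_0$. I expect the main obstacle to be convention-dependent. The paper only says ``lowest median rank'', so I must fix the median as the lower median $\ell_c$ to match Bucklin's majority-threshold definition. Under this reading the final statement is easy, and I only need to add a remark that if a different median convention (upper or midpoint) were used, the inclusion still holds by the same interval argument.
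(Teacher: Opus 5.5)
Your proof is correct and is essentially the argument the paper omits as ``straightforward''. The $L^q$ criterion separates across candidates, so $\E_D^*$ is the product of the coordinatewise minimizer sets: the mean-rank vector for $q=2$, and $\prod_c[\ell_c,u_c]$ for $q=1$. This is exactly the description the paper uses in its example after the proposition, and your characterization $c_D^*=\{c:\ \ell_c\le\min_{c'}u_{c'}\}$ reproduces that example's conclusion $c_D^*=\{c_1,c_2,c_3\}$.

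One correction concerns your median convention. The lower median $\ell_c$ is not ``the first rank at which a majority is reached''. Bucklin's rule requires a strict majority, i.e.\ more than $n/2$ voters ranking $c$ among their top $k$. So the Bucklin score is the $(\lfloor n/2\rfloor+1)$-th order statistic of $(e_{cv})_v$, which is $u_c$ when $n$ is even.

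This is also the convention the paper's example presupposes. There $c_1$ is the unique Bucklin winner because $u_{c_1}=2<u_{c_2}=u_{c_3}=3$. Your lower-median reading would instead make $c_1$ and $c_2$ tie at $\ell=1$.

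The inclusion survives, but you should make the convention-free argument the main line rather than a closing remark. Take any choice of medians $\mu_c\in[\ell_c,u_c]$ and any minimizer $c_B$ of $c\mapsto\mu_c$. Then $\ell_{c_B}\le\mu_{c_B}\le\mu_{c'}\le u_{c'}$ for every $c'$, hence $c_B\in c_D^*$. Bucklin's tie-break by vote counts only selects among such minimizers, so it changes nothing. The uniqueness claim then follows exactly as you wrote.
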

The proof is straightforward and thus omitted. Noted that with $L^1$ procedure, we distinguish whether the set $c_D^*$ is a singleton or not. Indeed, consider the case of 3 candidates and 4 voters with $\Phi$ equal to 
\begin{center}
\begin{tabular}{lcccc}
\toprule
& $v_1$ &$v_2$ &$v_3$ &$v_4$\\\midrule
$c_1$&  2 & 2 & 1 & 1\\
$c_2$&  1 & 1 & 3 & 3\\
$c_3$ &  3 & 3 & 2 & 2\\ 
\bottomrule
  \end{tabular}
\end{center}
In this configuration, Bucklin's winner is candidate $c_1$. But all vector $(a_1,a_2,a_3)^\top$ with $a_1\in[1,2]$, $a_2\in[1,3]$, $a_3\in[2,3]$ is in the deepest set. Hence $c_1$, $c_2$ and $c_3$ belong to $c_D^*$. 

\section{Deepest voting on permutations}
\label{sec:ranks}

Let us now deal with ranking-based voting. The voters give opinions on candidates on the form of rankings, that is, for each voter, the vector of opinions belongs to $\E=\S_\dim$, the set of permutations of $\{1,\dots,\dim\}$. With $n$ voters and $\dim$ candidates, the opinion matrix, denoted $\Phi$, belongs to $\E^n$. In that case, we consider depth functions defined on $\S_\dim$, with properties given in \cref{def:Goibert_depths}. For a given depth function $D()$, the associated voting rule is obtained maximizing $D(.,\Phi)$, as described in \cref{def:deepest_voting}.

In particular, based on \cref{def:depth_Goibert}, we propose to define the $p$-Fréchet mean voting rule on rankings. 

\begin{definition}\label{def:votingrules}
Consider a voting situation $\Phi\in\S_\dim^n$, and a distance $d()$ on $\S_\dim$. For all $p \geq 1$, denote  \[ \S^*(d,\Phi,p):=  
        \argmin_{\sigma \in \S_\dim} U_{d,\Phi,p}(\sigma), \mbox{  with  } U_{d,\Phi,p}(\sigma) = \frac{1}{n}\sum_{v=1}^n d^p(\sigma,\Phi(.,v))
         \]
the set of $p$-Fréchet means of $(\S_\dim,\Phi)$ with respect to $d$. Let
$$c_{p,d}^*:=\{c \in \{1,\ldots, \dim\},\; \sigma^*(c)=1,\; \sigma^* \in \S^*(d,\Phi,p) \}.$$
If $c_{p,d}^*$ is unique, then the winner of the election associated to $(d,p)$ is the candidate $c_{p,d}^*$. If $c_{p,d}^*$ is not unique, there is no unique winner of the election associated to $(d,p)$.
\end{definition}

As for continuous deepest voting, we will show that we recover usual voting rules.

\subsection{Classical rules as Fréchet voting}

Combining distances between permutations, depth functions on permutations lead to a huge number of voting rules, as defined in definition \ref{def:deepest_voting}. Some of these rules correspond to well-known voting rules. 

\subsubsection{Unweighted distances}
\label{sec:unweighted}

Let us first consider distances without weights (or with weights all equal to $1/\dim$).
We can establish links between Kemeny's procedure 
 and Borda's procedure 
 and some deepest voting procedure defined in \cref{def:votingrules}. Recall that Kendall's distance was defined in \eqref{eq:kendall} and that Spearman $\rho$~distance is $q$-Minkowski-Hölder distance (\eqref{eq:minkowski}) with $q=2$.

\begin{proposition}\label{prop:kemenyyoung}
  The deepest voting rule associated to consensus ranking and Kendall distance on $\S_\dim$ elects the Kemeny's winner of the election.
 \end{proposition}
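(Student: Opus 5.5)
The plan is to unfold the definitions until the consensus-ranking rule with Kendall's distance becomes literally the Kemeny optimization problem. By \cref{def:votingrules} with $p=1$ and $d=d_K$, the deepest rankings are
\[
\S^*(d_K,\Phi,1)=\argmin_{\sigma\in\S_\dim}\frac1n\sum_{v=1}^n d_K(\sigma,\Phi(.,v)).
\]
The factor $1/n$ does not affect the argmin, so this set equals $\argmin_{\sigma}\sum_v d_K(\sigma,\sigma_v)$. By the Remark following \cref{def:consensus_barycenter}, this is the same as maximizing the depth $D_1(\cdot,\Phi)$ of \cref{def:depth_Goibert}, since $\lVert d_K\rVert_\infty$ is a constant.

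Next I recall that Kemeny's rule selects the ranking(s) minimizing the total number of pairwise disagreements with the voters. That is, for each unordered pair $\{c,c'\}$ and each voter $v$, one counts a disagreement when $\sigma$ and $\sigma_v$ order $c$ and $c'$ oppositely. From \eqref{eq:kendall}, $d_K(\sigma,\sigma_v)$ is exactly this count for voter $v$: the indicator $\1\{(\sigma(c)-\sigma(c'))(\sigma_v(c')-\sigma_v(c))<0\}$ equals one precisely when $\sigma$ and $\sigma_v$ rank the pair in opposite orders. Note that $\sigma(c)$ is read as the rank of candidate $c$, consistent with the paper's convention $e_{cv}=\sigma_v(c)$. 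Summing over $v$, the objective becomes
\[
\sum_{c<c'} \#\{v:\ \sigma_v \text{ disagrees with } \sigma \text{ on } (c,c')\}.
\]
This is the Kemeny score. Hence $\S^*(d_K,\Phi,1)$ coincides with the set of Kemeny rankings.

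Finally I apply the winner extraction step. In \cref{def:votingrules}, $c^*_{1,d_K}$ is the set of candidates placed first, $\sigma^*(c)=1$, by some $\sigma^*$ in this set. Kemeny's winner is defined in the same way, as the top candidate of a Kemeny consensus ranking. So the two sets of winners are equal, and when either set is a singleton both rules elect the same candidate.

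The main obstacle is not computational but definitional. The orientation convention must be checked: whether $\sigma(c)$ denotes the rank of candidate $c$ or the candidate at position $c$. With the wrong reading, $d_K$ would compare the wrong pairs. The tie-handling convention must also match, namely that Kemeny's winner in the presence of several optimal rankings is taken as the set of their top candidates. I would state both conventions explicitly and verify them against \eqref{eq:kendall} and \cref{def:votingrules}.
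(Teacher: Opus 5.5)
Your argument is the paper's argument. The paper's proof quotes Felsenthal and Nurmi's description of Kemeny's rule as the ranking minimizing, summed over voters, the number of oppositely ordered pairs, and observes that this is the $1$-Fréchet functional for Kendall's distance. Your unpacking (dropping $1/n$, the link with $D_1$, extracting the top candidate, matching the tie conventions) is a more careful version of that same identification.

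However, one step is false as you state it, and it is exactly the orientation check you called the main obstacle. With \eqref{eq:kendall} as printed, the indicator $\1\{(\sigma(c)-\sigma(c'))(\sigma_v(c')-\sigma_v(c))<0\}$ equals one when $\sigma$ and $\sigma_v$ \emph{agree} on the pair. Indeed, if $\sigma(c)<\sigma(c')$ and $\sigma_v(c)<\sigma_v(c')$, the first factor is negative and the second positive, so the product is negative.

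So the printed formula counts concordant pairs, giving $d_K(\sigma,\sigma)=\dim(\dim-1)/2$. It is a sign typo in the paper and not a distance. Taken literally, minimizing $\sum_v d_K(\sigma,\sigma_v)$ would select the reversal of a Kemeny ranking. It would then elect a candidate that a Kemeny ranking places last, and the proposition would fail.

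State explicitly that you use the intended Kendall distance, with indicator $\1\{(\sigma(c)-\sigma(c'))(\sigma_v(c)-\sigma_v(c'))<0\}$. This is the form the paper itself uses in its proof of \cref{prop:una}. With that correction your proof is complete and coincides with the paper's.
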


\begin{proposition}\label{prop:borda}
   The deepest voting rule associated to barycenter ranking and Spearman $\rho$ distance is equivalent to Borda's voting method.
\end{proposition}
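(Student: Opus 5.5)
Our plan is to expand the objective $U_{d_2,\Phi,2}$ and show that minimizing it over $\S_\dim$ amounts to ordering the candidates by their Borda scores. For $\sigma\in\S_\dim$, with $d=d_2$ the Spearman $\rho$ distance from \eqref{eq:minkowski}, we have
\begin{equation*}
U_{d_2,\Phi,2}(\sigma)=\frac1n\sum_{v=1}^n\sum_{c=1}^\dim\bigl(\sigma(c)-e_{cv}\bigr)^2
=\sum_{c=1}^\dim\sigma(c)^2-\frac2n\sum_{c=1}^\dim\sigma(c)\sum_{v=1}^n e_{cv}+\frac1n\sum_{v,c}e_{cv}^2 .
\end{equation*}
The first term equals $\sum_{r=1}^\dim r^2$ for every permutation $\sigma$, and the last term does not depend on $\sigma$. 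Minimizing $U_{d_2,\Phi,2}$ is therefore equivalent to maximizing $\sum_c \sigma(c)\bar e_c$, where $\bar e_c=\frac1n\sum_v e_{cv}$ is the mean rank of candidate $c$.

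The key step is the rearrangement inequality. Among all bijections assigning the values $1,\dots,\dim$ to the candidates, $\sum_c\sigma(c)\bar e_c$ is maximal exactly when $\sigma$ orders the candidates in the same way as $\bar e$: if $\bar e_c<\bar e_{c'}$, then $\sigma(c)<\sigma(c')$. Any ordering of tied candidates gives the same value. The proof is the usual exchange argument. If $\bar e_c<\bar e_{c'}$ but $\sigma(c)>\sigma(c')$, swapping the two values strictly increases the sum by $(\sigma(c)-\sigma(c'))(\bar e_{c'}-\bar e_c)>0$. Hence $\S^*(d_2,\Phi,2)$ is exactly the set of rankings obtained by sorting the candidates by increasing mean rank, with ties broken in every possible way.

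The conclusion then follows. The set $c^*_{2,d_2}$ of \cref{def:votingrules} is the set of candidates placed first by some element of $\S^*$. This is exactly $\argmin_c\bar e_c$, the set of candidates with minimal total rank. The Borda score of candidate $c$ is $\sum_v(\dim-e_{cv})=n\dim-n\bar e_c$, so these are exactly the candidates with maximal Borda score. Thus the winners coincide, and the rule outputs no unique winner precisely when the Borda score has a tie at the top. As a by-product, the full set of barycenter rankings is the set of Borda social rankings.

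We expect no serious obstacle. The only care needed is in the tie handling of the exchange argument, namely checking that every sorted permutation, and only those, attains the maximum. This secures equality of the sets rather than mere inclusion, and that equality is what the word ``equivalent'' in the statement requires.
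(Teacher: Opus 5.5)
Your proof is correct. It reaches the paper's conclusion by a reorganized version of the same exchange argument, and it proves more.

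The paper works locally. It assumes the Borda winner $c^*$ is unique and, for any $\sigma^*$ ranking $c^*$ first and any $c_0\neq c^*$ with $\tau=(c^*,c_0)$, computes
\[
U_{d_2,\Phi,2}(\sigma^*)-U_{d_2,\Phi,2}(\sigma^*\circ\tau)=\frac{2}{n}\bigl(\sigma^*(c^*)-\sigma^*(c_0)\bigr)\Bigl(\sum_v\sigma_v(c_0)-\sum_v\sigma_v(c^*)\Bigr)<0 .
\]
Every permutation not ranking $c^*$ first has the form $\sigma^*\circ\tau$, so every minimizer ranks $c^*$ first.

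This identity is exactly the exchange step inside your rearrangement argument, so the mechanism is the same. Your preliminary expansion adds two things:
\begin{itemize}
\item Once you note that $\sum_c\sigma(c)^2$ is constant, the objective becomes the linear functional $\sum_c\sigma(c)\bar e_c$. The exchange then applies to every pair of candidates, not only to pairs involving the top one.
\item You get the full description of $\S^*(d_2,\Phi,2)$ as the set of Borda social orderings, and equality of the winner sets even when the top Borda score is tied.
\end{itemize}
The paper's proof only shows that a unique Borda winner is the unique deepest winner, and it says nothing about ties. Its argument is shorter and enough for the unique-winner reading of ``equivalent''. Yours justifies the stronger set-level statement. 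As you anticipated, you do need the check that all sorted permutations share the same value; they differ only by permutations within tie blocks.
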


Proofs of these propositions can be found in \cref{sec:proofs_unweighted}. Note that \cref{prop:borda_bucklin_continuous} shows that Borda's procedure can also be seen as a continuous $L^2$-deepest voting.

Borda's voting method is known to be the ranking-mean voting rule. It is not surprising that it corresponds to the Spearman $\rho$ distance deepest voting. Similarly, Bucklin voting method is known to be the ranking-median voting rule, and Spearman footrule ($q$-Minkowski-Hölder distance \eqref{eq:minkowski} with $q=1$) seems a natural distance to associate with. However, it is not the case, as stated in the following proposition.

\begin{proposition}\label{prop:bucklin}
   Bucklin's voting rule does not correspond to the deepest voting rule associated to consensus ranking and Spearman footrule distance, nor to the deepest voting rules associated to barycenter ranking and Spearman footrule distance as permutation distance.
\end{proposition}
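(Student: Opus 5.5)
The statement is negative, so I would prove it with explicit counterexamples: one voting situation $\Phi$ where the unique Bucklin winner is not the winner of the $(d_1,1)$ deepest voting of \cref{def:votingrules}, with $d_1$ the Spearman footrule, and one (ideally the same) where it differs from the $(d_1,2)$ deepest voting winner.

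\textbf{Reduction to assignment problems.} For $p=1$ I would write
\[
\sum_v d_1(\sigma,\Phi(.,v))=\sum_c C(c,\sigma(c)), \qquad C(c,r)=\sum_v|r-e_{cv}|.
\]
So the consensus ranking solves a linear assignment problem with cost matrix $C$. Each $C(c,\cdot)$ is minimised at a median rank of $c$, but the permutation constraint couples the candidates, and $C(c,1)$ grows with how far the non-median voters place $c$. Bucklin, by contrast, depends only on the median rank of each candidate. For $p=2$ the objective $\sum_v\bigl(\sum_c|\sigma(c)-e_{cv}|\bigr)^2$ is no longer separable. It penalises voters whose rankings are far from $\sigma$ even more heavily than $p=1$ does.

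\textbf{Intended construction.} I would exploit this mismatch with a polarizing candidate $b$ and a compromise candidate $a$. Take $m=4$ candidates and $2k+1$ voters. Let $b$ be ranked first by $k+1$ voters and last by the other $k$, so its median rank is $1$ and $b$ is the unique Bucklin winner. Let $a$ be ranked second by the majority and first by the others, giving it median rank $2$; the remaining candidates $c,d$ fill the other positions.

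Making $b$ lose under the footrule requires more than distances to $\sigma$. The naive version of this profile (take $m=3$, $k=2$: three voters $(b,a,c)$ and two voters $(a,c,b)$) still returns $b$ first: the ranking $(b,a,c)$ has total footrule cost $8$, against $10$ for $(a,b,c)$. So the minority voters who rank $b$ last must be made numerous enough, or $m$ large enough, that the extra cost $\sum_v|1-e_{bv}|$ outweighs what the majority saves. Concretely:
\begin{enumerate}[topsep=0pt]
\item Let the minority rank $b$ at position $m$.
\item Let $a$ be second in every majority ballot.
\item Tune $k$ and $m$ until the footrule optimum places $a$ first, while $b$ keeps median rank $1$ (that is, while $b$ stays first for a strict majority).
\end{enumerate}

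\textbf{Verification and the hard step.} Once a candidate profile is fixed, I would verify it by enumerating all $m!$ permutations (24 for $m=4$). For each permutation I would compute $U_{d_1,\Phi,1}$ and $U_{d_1,\Phi,2}$, exhibit the minimisers, and check that none places $b$ first, so that $b\notin c^*_{p,d_1}$ for $p=1$ and for $p=2$. Because the $p=2$ objective punishes the minority's large disagreement even more, the same profile should work for the barycenter ranking. If it does not, I would build a separate, Borda-flavoured example for $p=2$, with many voters ranking $b$ last.

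The main obstacle is finding a profile where $b$ keeps its strict-majority first place, hence median $1$, yet the assignment cost still favours $a$. This is a genuine tension, since majority support directly lowers $C(b,1)$. I would first try a small computer search over profiles with $m=4$ and at most $7$ voters, then present the smallest example found, with its cost table shown explicitly.
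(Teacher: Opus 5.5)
There is a genuine gap. This statement is proved only by exhibiting a concrete profile and verifying its minimisers, and your proposal never produces one. The recipe you outline for $p=1$ also points in the wrong direction. Suppose the $k+1$ majority ballots coincide, as in your naive version, with common ballot $M$. Then for every $\sigma$ the triangle inequality gives $\sum_v d_1(\sigma,\sigma_v)\ge d_1(\sigma,M)+\sum_v d_1(M,\sigma_v)$. So $M$ is the unique consensus ranking, and $b$ wins for every choice of $k$ and $m$.

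The trade-off you describe, the minority's extra cost $\sum_v|1-e_{bv}|$ against the majority's savings, always favours rank $1$: moving $b$ down to rank $r$ raises $b$'s own cost by exactly $r-1$, whatever $k$ and $m$. Any gain must therefore come from the other candidates. That requires disagreement \emph{inside} the majority, an ingredient your plan never mentions.

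With that ingredient your route does work. Take $m=5$, four voters ranking $b$ first followed by the four cyclic shifts of $(x_1,x_2,x_3,x_4)$, and three voters $(x_1,x_2,x_3,x_4,b)$. Then $b$ is first for $4$ of $7$ voters, and $(x_1,x_2,x_3,x_4,b)$ has footrule cost $40$. Every ranking with $b$ first costs at least $12+20+12=44$:
\begin{itemize}
\item $12$ for $b$;
\item $20$ for the majority cost of the $x_i$, which is the same for every placement on ranks $2,\dots,5$ because each $x_i$ occupies each of these ranks once among the majority ballots;
\item at least $3\cdot 4$ for the minority.
\end{itemize}

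Your transfer to $p=2$ is only a heuristic, and it fails for this profile: $(b,x_1,x_2,x_3,x_4)$ has squared cost $328$, against $416$ for the $p=1$ minimiser. Conversely, the naive profile you discarded after checking only $p=1$ already settles $p=2$. There $(a,b,c)$ is the unique minimiser, with squared cost $20$ against $32$ for $(b,a,c)$ and more for every other ranking.

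The paper avoids the tension you ran into by not using a majority-first Bucklin winner at all. In its example (five voters, four candidates), the Bucklin winner $c_2$ has median rank $2$, being ranked second by every voter. Meanwhile $c_1$, with ranks $1,1,4,4,3$ and median $3$, is placed first by the unique minimiser $(1,2,3,4)^\top$, both for $p=1$ (cost $16$) and for $p=2$ (squared cost $88$). A single profile thus settles both cases. It works because Bucklin ignores how the non-median ranks are spread, whereas a candidate ranked first by an identical strict majority is protected by the triangle-inequality argument above.
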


A proof is provided in \cref{sec:proofs_unweighted}. Remark that \cref{prop:borda_bucklin_continuous} expresses Bucklin's procedure as a result of a continuous deepest voting, where ranks are seen as evaluations.

\subsubsection{Weighted distances}
\label{sec:weighted}
Let us suppose now that the distances are weighted, as proposed in \cref{sec:weights}, with non uniform weights. Special set of weights can be used to represent plurality and anti-plurality rules with weighted Hamming distance \eqref{eq:hamming_weighted} or weighted Spearman footrule distance (weighted $q$-Minkowski-Hölder distance \eqref{eq:minkowski_weighted} with $q=1$).

In the following we denote $DV(p, d, W)$ the  voting rule on permutations associated to the $p$-Fréchet mean as depth function, the permutation distance $d()$ and the weights $W$. Recall that, as discussed previously, the associated weighted \emph{distance} functions are not true distances on the set of permutations. For the sake of simplicity, we denote the optimum permutation given by the voting rule as the {\it deepest permutation}.

Denote $W({1,(1)})$ the weights such that $w_{r,r'}=1$ if $r=1$ or $r'=1$ and $w_{r,r'}=0$ elsewhere.
Denote also $W({-1,(m)})$ the weights such that $w_{r,r'}=-1$ if $r=m$ or $r'=m$ and $w_{r,r'}=0$ elsewhere.
These weights allow us to link some well known voting methods to the depth framework on permutations.

\begin{proposition}\label{prop:plurality}
    $DV(1, \mbox{Hamming}, W({1,(1)}))$ are equivalent to  Plurality voting method.
   \end{proposition}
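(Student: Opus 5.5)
We compute the objective $U_{d_{wH},\Phi,1}(\sigma)$ explicitly and identify its minimizers. With the weights $W(1,(1))$, the term for candidate $c$ in $d_{wH}(\sigma,\sigma_v)$ equals $w_{(\sigma(c),\sigma_v(c))}\1\{\sigma(c)\neq\sigma_v(c)\}$. It is nonzero only when exactly one of $\sigma(c),\sigma_v(c)$ equals $1$: if both equal $1$ the indicator vanishes, and if neither does the weight vanishes. In that case the term equals $1$.

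Let $a=\sigma^{-1}(1)$ be the top candidate of $\sigma$ and $b_v=\sigma_v^{-1}(1)$ the top candidate of voter $v$. The contributing candidates are then exactly $a$ and $b_v$, each contributing $1$, and only when $a\neq b_v$. Hence
\begin{equation}
d_{wH}(\sigma,\sigma_v)=2\,\1\{\sigma^{-1}(1)\neq\sigma_v^{-1}(1)\}.
\end{equation}
The first step is to verify this identity carefully by splitting into the cases $a=b_v$ and $a\neq b_v$.

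Averaging over voters with $p=1$ gives
\begin{equation}
U_{d_{wH},\Phi,1}(\sigma)=\frac{2}{n}\bigl(n-N(\sigma^{-1}(1))\bigr),
\end{equation}
where $N(c)=\#\{v:\sigma_v(c)=1\}$ is the plurality score of candidate $c$. Thus $U$ depends on $\sigma$ only through its top candidate, and it is minimized exactly when that top candidate maximizes $N$.

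It follows that $\S^*(d_{wH},\Phi,1)$ is the set of all permutations whose first-ranked candidate is a plurality winner, with every ordering of the remaining positions allowed. Therefore $c^*_{1,d_{wH}}$ as in \cref{def:votingrules} equals the set $\argmax_c N(c)$ of plurality winners. In particular, the rule has a unique winner if and only if plurality has one, and the two winners then coincide, which is the claimed equivalence.

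There is no real obstacle here. The only subtle points are two. First, $d_{wH}$ is not a genuine distance, but \cref{def:votingrules} only uses it as a cost function, so this does not matter. Second, ties in plurality scores must be shown to correspond exactly to non-uniqueness of $c^*_{1,d_{wH}}$, which the set identity above already gives.
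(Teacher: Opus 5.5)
Your proof is correct and follows the paper's argument: the paper likewise reduces the weighted Hamming distance under $W(1,(1))$ to $2\,\1\{\sigma^{-1}(1)\neq\tau^{-1}(1)\}$ and concludes that the minimizers of the $1$-Fréchet functional are exactly the permutations whose top candidate has maximal plurality score. Your explicit case check of that identity and your description of the full minimizing set, which also covers ties, simply spell out steps the paper leaves implicit.
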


\begin{proposition}\label{prop:antiplurality}
    $DV(1, \mbox{Hamming}, W({-1,(m)}))$ are equivalent to Antiplurality voting method.
\end{proposition}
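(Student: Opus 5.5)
The plan is to compute $U_{d_{wH},\Phi,1}$ explicitly for the weights $W({-1,(m)})$, in the same way as for \cref{prop:plurality}. The point is that these weights make the weighted Hamming dissimilarity depend only on which candidate each ranking places in position $m$.

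\textbf{Step 1 (pairwise dissimilarity).} Fix $\sigma,\tau\in\S_\dim$ and let $a=\sigma^{-1}(m)$, $b=\tau^{-1}(m)$ be the candidates ranked last by $\sigma$ and $\tau$. In \eqref{eq:hamming_weighted}, a candidate $c$ contributes only if $\sigma(c)\neq\tau(c)$ and $m\in\{\sigma(c),\tau(c)\}$, because otherwise $w_{(\sigma(c),\tau(c))}=0$.
\begin{itemize}
\item If $a=b$, no candidate contributes, so $d_{wH}(\sigma,\tau)=0$.
\item If $a\neq b$, exactly the two candidates $a$ and $b$ contribute, each with weight $-1$.
\end{itemize}
Hence
\[
d_{wH}(\sigma,\tau)=-2\,\1\{\sigma^{-1}(m)\neq\tau^{-1}(m)\}.
\]

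\textbf{Step 2 (Fréchet functional).} Write $V(c)=\#\{v:\Phi(c,v)=m\}$ for the number of voters ranking $c$ last, i.e.\ the antiplurality (veto) count of $c$. Averaging Step 1 over the voters gives
\[
U_{d_{wH},\Phi,1}(\sigma)=-\frac{2}{n}\bigl(n-V(\sigma^{-1}(m))\bigr).
\]
Thus minimizing $U$ over $\S_\dim$ amounts to minimizing $V(\sigma^{-1}(m))$. The set $\S^*(d_{wH},\Phi,1)$ is therefore exactly the set of permutations whose last-ranked candidate belongs to $\argmin_c V(c)$, which is the set of antiplurality winners.

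\textbf{Step 3 (reading off the winner).} I would conclude by identifying the candidate that the deepest permutations single out with the antiplurality winner. A unique minimizer of $V$ then gives a unique elected candidate, and ties in $V$ correspond to ties in the deepest set.

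\textbf{Main obstacle.} Step 3 is where I expect the difficulty. By Step 2, the deepest permutations are pinned down only through their bottom position. Their top position, which is what $c^*_{p,d}$ in \cref{def:votingrules} reads off, is arbitrary among the remaining candidates. So with the literal "first place" reading, $c^*_{p,d}$ would be every candidate except a minimizer of $V$, not the antiplurality winner. I would therefore first check carefully which position the paper intends the weights $W({-1,(m)})$ to distinguish. This is the negative-weight mirror of \cref{prop:plurality}, where the distinguished position is $1$. I would then state the equivalence as: the candidate occupying the distinguished position $m$ in every deepest permutation is the antiplurality winner. If the intended reading really is the first-placed candidate, I would instead look for a sign or convention adjustment, for instance reversing the role of ranks, so that Steps 1 and 2 go through with position $m$ replaced accordingly.
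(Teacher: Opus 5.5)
Your Steps 1--2 are the paper's argument: it also reduces the weighted Hamming dissimilarity to an indicator that the last-ranked candidates differ (the paper writes the value as $-1$ rather than your correct $-2$, which is immaterial), and it concludes exactly as you propose, by identifying $\sigma^{*-1}(m)$ with the antiplurality winner. Your concern in Step 3 is legitimate: the paper's proof silently reads the winner off position $m$ rather than position $1$ as in \cref{def:votingrules}, under which reading the statement would fail, so your reformulation is precisely the sense in which the paper's equivalence holds.
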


Proofs are given in \cref{sec:proofs_weighted}.

\begin{table}[!ht]
\centering
\begin{tabular}{llll}
\toprule
    Distance & Weights  & $p$ in $p$-Fréchet mean & Voting rule \\ \midrule
    Kendall & none  & 1 & Kemeny \\ 
    Spearman $\rho$ & none & 2 & Borda \\ 
    Hamming & $W(1,(1))$ & 1 & Plurality \\
    Hamming & $W(-1,(\dim))$ & 1 & Antiplurality \\
\bottomrule
\end{tabular}
\caption{Connections between deepest voting rules on permutations and classical voting rules.}
\label{tab:deepest_ranks}
\end{table}

Obtained links between classical voting rules and deepest voting rules are summarized in \cref{tab:deepest_ranks}.
The objective of this paper is not to provide an exhaustive list of links between usual social choice functions and deepest voting procedures, but to highlight the connections. This works shows that several social choice functions can be associated with a depth function, and hence write as an optimization problem.

\subsection{Voting rules properties}
\label{sec:properties}

Studying formal properties of voting rules is a matter of interest for an axiomatic approach of elections. \citep[Chapter 2]{Felsenthal2018} contains a good review of classical properties of a voting rule. We propose hereafter to discuss some of them. \cite{aubin2022deepest} has related some properties of deepest voting rules with the mathematical properties of the continuous depth functions. We extend their result to permutation-based deepest voting.

\begin{definition}[Voting rules properties]~
\label{def:properties}
\begin{itemize}[topsep=0pt]
\item \emph{Neutrality}. 
The social choice function gives the same result by permuting the rows of $\Phi$ (\emph{i.e.} by permuting the candidates).
\item {\emph{Anonymity}.}  
The voting procedure gives the same result by permuting the columns of $\Phi$ (\emph{i.e.} by permuting the voters).
\item  {\emph{Universality}.}  For all $\Phi \in \S_m^n $, the voting procedure provides a subset of $\C$.
\item {\emph{Unanimity}.}
If a candidate is in first position for all voters, it should be in first position in the voting procedure's results. 
\item {\emph{Monotonicity}.}  
Suppose that a voting situation $\Phi$ gives a winner $c^*$, and that there exists a voter $v_0\in\{1,\dots,n\}$ such that $\Phi(c^*,v_0)= \alpha \neq 1$. Let $c_0$ be the candidate such that $\Phi(c_0,v_0)=\alpha-1$. 
Consider another voting situation $\widetilde{\Phi}$ such that $\widetilde{\Phi}=\Phi$ except that $\widetilde{\Phi}(c^*,v_0)=\alpha-1$ and $\widetilde{\Phi}(c_0,v_0)=\alpha$. Then the winner of the voting procedure is still $c^*$.
\item {\emph{Independence to Losers}.} 
  Consider a voting situation $\Phi$ with $\dim$ candidates $\C=\{c_1,\dots,c_\dim\}$ and $n$ voters, with a winner $c^*$. Consider another voting situation $\widetilde{\Phi}$ with $\dim-1$ candidates and $n$ voters, where a candidate $c_0\neq c^*$ as been removed. Suppose that $\widetilde{\Phi}$ is identical on the reduced set of candidates. That is, for any voter $v \in \{1, \ldots, n\}$, the order of $\{\Phi(c,v),\ c\in\C\setminus\{c_0\}\}$ is the same as the order of $\{\widetilde{\Phi}(c,v),\ c\in\C\setminus\{c_0\}\}$. Then the winner for $\widetilde{\Phi}$ is still $c^*$.
\item {\emph{Condorcet Winner}}. A candidate $c_0\in \C$ is a Condorcet winner if for all candidates $c \in \C\setminus\{c_0\}$, the number of voters $v$ such that $\Phi(c_0,v)<\Phi(c,v)$ is strictly higher than the number of voters $v$ such that $\Phi(c_0,v)>\Phi(c,v)$.
A voting rule satisfies the Condorcet Winner property if it elects the Condorcet winner when it exists.  
\item {\emph{Condorcet Looser}}.  
 A candidate $c_0\in \C$ is a Condorcet looser if for all candidates $c \in \C\setminus\{c_0\}$, the number of voters $v$ such that $\Phi(c_0,v)<\Phi(c,v)$ is strictly lower than the number of voters $v$ such that $\Phi(c_0,v)>\Phi(c,v)$. 
A voting rule satisfies the Condorcet looser property if it never elects the Condorcet looser when it exists.  

\end{itemize}
\end{definition}

Our objective in this section is to study if the properties stated above are satisfied with deepest voting rules. As said previously, some results were obtained in \cite{aubin2022deepest} and we concentrate here on permutation-based deepest voting.  

Proofs of this section can be found in \cref{sec:proofs_properties}.

First, Neutrality, Anonymity and Universality are satisfied whatever the depth function used. As for Unanimity, we can prove it is satisfied by deepest voting rules for a large category of permutation-based depth functions.

\begin{proposition}\label{prop:NAU}
   Any deepest voting process satisfies Neutrality, Anonymity and Universality.
\end{proposition}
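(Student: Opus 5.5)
The three properties follow directly from the definition of the deepest voting process (\cref{def:deepest_voting} and \cref{def:votingrules}). The approach is to treat each property separately and, in each case, reduce it to a structural property of the depth function: its dependence on the empirical distribution only, the stability property \ref{ass:depth0} or \ref{ass:invariance}, or finiteness of $\S_\dim$.

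\textbf{Anonymity.} First note that $D(\cdot,\Phi)$ is defined as $D(\cdot,F_\Phi)$, where $F_\Phi$ is the empirical distribution of the columns $\Phi(\cdot,v)$, $v=1,\dots,n$. A permutation of the columns of $\Phi$ leaves $F_\Phi$ unchanged. Hence the depth function, the set of deepest points $\E^*_D$ (or $\S^*(d,\Phi,p)$), and therefore $c^*_D$ are unchanged. For the Fréchet voting rules this is also visible directly, since $U_{d,\Phi,p}(\sigma)=\frac1n\sum_v d^p(\sigma,\Phi(\cdot,v))$ is symmetric in $v$.

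\textbf{Neutrality.} Let $s\in\S_\dim$ relabel the candidates. The permuted profile has columns $\sigma_v\circ s$ (up to the convention used, possibly $s^{-1}$), so its empirical distribution is the image distribution $\Pi_s$ of \ref{ass:invariance}. By invariance, $D(\sigma\circ s,\Pi_s)=D(\sigma,\Pi)$. Thus $\sigma\mapsto\sigma\circ s$ is a bijection from the set of deepest permutations of $\Pi$ onto that of $\Pi_s$. A candidate ranked first by $\sigma^*$ is then exactly the relabelled candidate ranked first by $\sigma^*\circ s$, so $c^*_D$ is transformed by the same relabelling.

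For the $p$-Fréchet rules, \ref{ass:invariance} has to be checked for the distance itself. It requires right-invariance $d(\sigma\circ s,\tau\circ s)=d(\sigma,\tau)$, which holds for all distances of \cref{sec:dist_perm}, since they are sums or maxima over $c$ of terms depending only on $(\sigma(c),\tau(c))$ or pairs thereof. This re-indexing argument also covers the weighted versions. In the continuous case, the same argument uses \ref{ass:depth0} in place of \ref{ass:invariance}.

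\textbf{Universality.} We need $c^*_D$ to be a well-defined (nonempty) subset of $\C$. On $\S_\dim$ this is immediate: $\S_\dim$ is finite, so the maximum of $D(\cdot,\Phi)$ (equivalently, the minimum of $U_{d,\Phi,p}$) is attained. Hence $\S^*(d,\Phi,p)\neq\emptyset$, and the set of first-ranked candidates is a nonempty subset of $\C$.

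In the continuous setting, attainment of the supremum is the only delicate point, and I expect it to be the main obstacle. The plan there is to use boundedness together with \ref{ass:depth4}: the superlevel sets are bounded, so it suffices to work on a compact set. Upper semicontinuity of $D(\cdot,F)$ is assumed or known for the standard depths, and then the supremum is attained. If one does not want to rely on this, one can simply note that for the permutation depths treated in this section only the finite case is needed.
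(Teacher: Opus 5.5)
Your proof is correct. For Anonymity it is the paper's argument. For Neutrality the core step is also the same: the paper uses directly that relabelling the candidates in both arguments leaves $d(\sigma,\Phi(\cdot,v))$ unchanged, so the set of Fréchet means is transported by the relabelling, and it cites \cite{aubin2022deepest} for continuous depths. You instead route Neutrality through the axioms \ref{ass:invariance} and \ref{ass:depth0}, which gives the result for any depth satisfying them. You then check \ref{ass:invariance} for Fréchet depths via right-invariance of $d$; your order $\sigma\circ s$ is the one consistent with $\Phi(s(\cdot),v)=\sigma_v\circ s$. One small fix: your stated reason for right-invariance, that $d$ is a sum or maximum of terms in $(\sigma(c),\tau(c))$ or pairs thereof, does not apply to the Cayley distance. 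There, use $(\sigma\circ s)\circ(\tau\circ s)^{-1}=\sigma\circ\tau^{-1}$ instead.

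The genuine difference is Universality. The paper reads it as the rule being defined for every profile in $\S_\dim^n$, which it calls obvious because the depth is defined on the whole domain. You prove the stronger fact that $c^*_D\neq\emptyset$, and on $\S_\dim$ this follows from finiteness exactly as you say. In the continuous setting you are right that \cref{def:depth_R} alone does not ensure that $\sup D(\cdot,\Phi)$ is attained, so nonemptiness needs an extra assumption such as upper semicontinuity. The paper's one-line argument does not settle this either, so confining the rigorous claim to permutation depths is the right call.
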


 \begin{proposition}\label{prop:una}

 Consider a distance on permutations which is either Hamming, Kendall or a $q$-Minkowski-Hölder distance on permutations, with $q \in  [1, +\infty[ $. Then, for all $p\geq 1$, the associated $p$-Fréchet mean voting satisfies Unanimity.
 
 When $q=\infty$, $p$-Fréchet mean voting satisfies Unanimity when the winner is unique, but it may not satisfy Unanimity otherwise  (\emph{i.e.} there are at least two permutations with a different top-ranked candidate in the deepest set).

\end{proposition}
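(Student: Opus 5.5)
Suppose candidate $c_0$ is ranked first by every voter, so $\Phi(c_0,v)=1$ for all $v$. The plan is an exchange argument. Take any $\sigma$ with $\sigma(c_0)=k\neq 1$, let $c_1$ be the candidate with $\sigma(c_1)=1$, and let $\tilde\sigma=\sigma\circ(c_0\,c_1)$ be $\sigma$ with the positions of $c_0$ and $c_1$ swapped, so that $\tilde\sigma(c_0)=1$ and $\tilde\sigma(c_1)=k$. We will show that for every voter $v$, $d(\tilde\sigma,\Phi(.,v))\le d(\sigma,\Phi(.,v))$, with strict inequality for the distances in the first part of the statement. Since $x\mapsto x^p$ is increasing on $\R^+$, this gives $U_{d,\Phi,p}(\tilde\sigma)<U_{d,\Phi,p}(\sigma)$. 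Hence no $\sigma$ with $\sigma(c_0)\neq 1$ lies in $\S^*(d,\Phi,p)$, and therefore $c^*_{p,d}=\{c_0\}$.

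We check the per-voter inequality distance by distance, with $\tau=\Phi(.,v)$, $\tau(c_0)=1$, and $j=\tau(c_1)\ge 2$. Only the coordinates $c_0$ and $c_1$ change.
\begin{itemize}
\item \textbf{Hamming.} $\sigma$ disagrees with $\tau$ at $c_0$, while $\tilde\sigma$ agrees there. At $c_1$, $\sigma(c_1)=1\neq j$, so $\sigma$ disagrees there too. The swap therefore lowers the count by at least one.
\item \textbf{Minkowski--Hölder, $q<\infty$.} Compare $|k-1|^q+|1-j|^q$ (for $\sigma$) with $0+|k-j|^q$ (for $\tilde\sigma$). Since $|k-j|\le\max(k-1,j-1)$ and $k-1\ge1$, we get $|k-j|^q<(k-1)^q+(j-1)^q$ strictly.
\item \textbf{Kendall.} The swap puts $c_0$ at the top, which repairs every pair involving $c_0$ that $\sigma$ gets wrong. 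Concretely, I will decompose $\sigma\to\tilde\sigma$ as moving $c_0$ to position 1 by adjacent transpositions and argue that each step that moves $c_0$ up past a candidate reduces discordance with $\tau$ by one.
\end{itemize}

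The Kendall case is the main obstacle, because swapping $c_0$ and $c_1$ directly also reorders $c_1$ relative to the candidates that were in positions $2,\dots,k-1$, which might increase discordances. To avoid this I will not use the transposition. Instead I take $\hat\sigma$ obtained by moving $c_0$ to the top and shifting the candidates in positions $1,\dots,k-1$ down by one. Then only pairs involving $c_0$ change, each becomes concordant with $\tau$, and at least one was discordant, so $d_K$ strictly decreases. The same "move to top" device can replace the swap in the other cases if that is more convenient, but the plain swap suffices there.

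For $q=\infty$ the per-voter inequality is only weak: $\max$ is insensitive when a third coordinate dominates. So $d_\infty(\tilde\sigma,\tau)\le d_\infty(\sigma,\tau)$, which still holds because the two changed coordinates satisfy $|k-j|\le\max(k-1,j-1)$. This yields $U(\tilde\sigma)\le U(\sigma)$, so some deepest permutation ranks $c_0$ first. If the winner is unique, $c^*_{p,d}=\{c_0\}$ and Unanimity holds. For the negative part, I will exhibit a small example, for instance $\dim=3$ with all voters ranking $c_0$ first, where some $\sigma$ with $\sigma(c_0)=2$ attains the same minimal $\max$-deviation (every voter's deviation from both $\sigma$ and $\tilde\sigma$ equals 2). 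This puts two permutations with different top candidates in $\S^*$.
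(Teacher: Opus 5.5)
\textbf{Positive part.} Your treatment of the positive statements is correct. For Hamming and for $q$-Minkowski--Hölder with $q<\infty$ you use the same transposition $\tilde\sigma$ and the same per-voter strict inequality as the paper; for $q=\infty$ you use the same weak inequality.

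For Kendall you replace the transposition by the move-to-top permutation $\hat\sigma$. This gives the exact identity $d_K(\hat\sigma,\tau)=d_K(\sigma,\tau)-(k-1)$, which is cleaner than the paper's bookkeeping. The plain swap would also have worked, though. For every candidate $c$ that $\sigma$ places strictly between $c_1$ and $c_0$, the pair $\{c_0,c\}$ is discordant for $\sigma$ and concordant for $\tilde\sigma$. That gain pays for any new discordance on $\{c_1,c\}$, and the pair $\{c_0,c_1\}$ supplies the strict gain. This is the content of the paper's decomposition with $\delta_v\ge 0$.

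Drop the aside that move-to-top would also serve for the other distances. It fails for Hamming: take $\dim=4$, $\sigma$ ordering the candidates $c_1,c_2,c_3,c_0$, and a voter $\tau$ ordering them $c_0,c_2,c_3,c_1$. Then $d_H(\sigma,\tau)=2$ but $d_H(\hat\sigma,\tau)=3$.

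\textbf{The gap.} The gap is in the negative half of the $q=\infty$ claim. That half can only be proved by an explicit profile, and the one you sketch cannot exist.

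Suppose $\sigma$, with $\sigma(c_0)=k\ge 2$, is deepest. Then $\tilde\sigma$ is deepest too, and your weak per-voter inequality forces $d_\infty(\sigma,\tau)=d_\infty(\tilde\sigma,\tau)$ for every voter $\tau$. Write $j=\tau(c_1)$. Since $|k-j|<\max(k-1,j-1)$ whenever $k,j\ge 2$, this equality holds if and only if $R_\tau:=\max_{c\neq c_0,c_1}|\sigma(c)-\tau(c)|\ge\max(k-1,j-1)$. In words, every voter's maximal deviation must be attained away from the two swapped coordinates.

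For $\dim=3$ the only remaining candidate $c_2$ has $\sigma(c_2),\tau(c_2)\in\{2,3\}$, so $R_\tau\le 1$. This forces $k=j=2$, hence $\sigma(c_2)=\tau(c_2)=3$ and $R_\tau=0$, a contradiction. So with three candidates every deepest permutation ranks $c_0$ first.

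Your parenthetical is also impossible on its face. Two rankings of three candidates that both put $c_0$ first are at $d_\infty$-distance at most $1$, so no voter is at distance $2$ from $\tilde\sigma$.

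\textbf{What a valid counterexample needs.} You need enough candidates that, for \emph{every} voter, some third candidate is displaced by at least $\max(k,\tau(c_1))-1$, while $\tilde\sigma$ stays globally optimal. The paper uses $\dim=6$ and $n=5$, with rank vectors $v_1=v_3=(1,4,6,3,5,2)$, $v_2=(1,6,4,5,2,3)$, $v_4=(1,2,4,6,5,3)$, $v_5=(1,6,5,4,3,2)$. In its labels the unanimous candidate is $c_1$ and the competing top candidate is $c_6$. Both $(1,5,6,3,4,2)$ and $(2,5,6,3,4,1)$ have deviations $(1,2,1,3,1)$ from the five voters.

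Optimality for $p=1$ follows from the triangle inequality. Set $a_i=d_\infty(\sigma,v_i)$; the objective is $2a_1+a_2+a_4+a_5$.
\begin{itemize}
\item If $a_1=0$, then $\sigma=v_1$ and the objective equals $8$.
\item If $a_1=1$, it is at least $2+(3-1)+4=8$, using $d_\infty(v_1,v_2)=3$ and $d_\infty(v_4,v_5)=4$.
\item If $a_1\ge 2$, it is at least $4+d_\infty(v_2,v_4)=8$.
\end{itemize}
Hence $c_6$ also tops a deepest permutation, which is the required failure.
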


The Condorcet-winner property and the Monotonicity are satisfied by some deepest voting procedures, as stated in the two following propositions.

\begin{proposition} \label{prop:condowinner}
~
\begin{enumerate}[topsep=-\baselineskip]
\item For $p=1$, Kendall-based consensus ranking satisfies the Condorcet-winner property.
\item For $p>1$, Kendall-based $p$-Fréchet mean does not satisfy the Condorcet-winner property.
\item For all $p \geq 1$, Cayley and Hamming based voting rules do not satisfy the Condorcet-winner property.
\item For $p = 1$ and all $q \geq 1$, $q$-Minkowski-Hölder based voting rules do not satisfy the Condorcet-winner property.
\end{enumerate}
\end{proposition}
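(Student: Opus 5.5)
For the first item I would use a local exchange argument rather than an explicit description of Kemeny rankings. By \cref{prop:kemenyyoung}, a consensus ranking for the Kendall distance with $p=1$ is a Kemeny ranking, so it minimises $U_{d_K,\Phi,1}(\sigma)=\frac1n\sum_v d_K(\sigma,\Phi(.,v))$.

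Suppose $c_0$ is a Condorcet winner and $\sigma^*\in\S^*(d_K,\Phi,1)$ does not rank $c_0$ first. Let $c$ be the candidate placed immediately before $c_0$ in $\sigma^*$, and let $\tilde\sigma$ be $\sigma^*$ with $c$ and $c_0$ swapped.

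Swapping two adjacent candidates changes only the relative order of that single pair. Hence, for each voter $v$, $d_K(\tilde\sigma,\Phi(.,v))-d_K(\sigma^*,\Phi(.,v))$ equals $-1$ if $v$ prefers $c_0$ to $c$, and $+1$ otherwise. Summing over voters,
\[
n\bigl(U_{d_K,\Phi,1}(\tilde\sigma)-U_{d_K,\Phi,1}(\sigma^*)\bigr)=\#\{v:\Phi(c,v)<\Phi(c_0,v)\}-\#\{v:\Phi(c_0,v)<\Phi(c,v)\}.
\]
This is $<0$ by the Condorcet property, which contradicts minimality. So every $\sigma^*$ ranks $c_0$ first, and therefore $c^*_{1,d_K}=\{c_0\}$.

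Items 2–4 are negative statements, so for each I would exhibit a counterexample. I would work with $\dim=3$ candidates $a,b,c$, where each $U_{d,\Phi,p}$ only has to be evaluated on $6$ permutations. The guiding idea is that a $p$-Fréchet mean with $p>1$, and also a mean for distances that ignore pairwise comparisons (Hamming, Cayley, footrule), can be pulled away from the majority by a far-away minority.

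For item 2 I would take a profile with $k+1$ voters $a\succ b\succ c$ and $k$ voters $c\succ b\succ a$, with a few additional voters placing $b$ first if needed. Then $a$ is a thin Condorcet winner.
\begin{itemize}
\item A ranking with $a$ first is at Kendall distance up to $3$ from the minority voters.
\item A compromise ranking with $b$ first (e.g.\ $b\succ a\succ c$) is at distance $1$ or $2$ from every voter.
\end{itemize}
Computing $U_{d_K,\Phi,p}$ for the two candidate optima gives a difference of the form $(k+1)\cdot 0^p+k\,3^p$ against $(k+1)1^p+k\,2^p$ (plus the contribution of the $b$-first voters). For $p>1$ one has $3^p-2^p>1$, so for $k$ large the compromise ranking wins. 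The choice of $k$ depends on $p$, which is acceptable because the claim is only that the property fails for each fixed $p>1$.

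For items 3 and 4 I would build analogous small profiles and check them by direct enumeration.
\begin{itemize}
\item \emph{Hamming and Cayley.} These distances only count fixed positions or cycles. A Condorcet winner that is never ranked first, but is often second, can be beaten by a candidate that is ranked first by a plurality of voters and last by the others.
\item \emph{Footrule with $p=1$.} I would use $q$-Minkowski-H\"older with $q=1$ and reuse the paper's own footrule example from the proof of \cref{prop:bucklin} as a starting point, adjusted so that a Condorcet winner exists.
\end{itemize}
Since $q$ ranges over all of $[1,\infty]$, I would aim for a single profile in which all Fréchet means have $b$ first while $a$ is the Condorcet winner. I would check separately that this holds for $q=1$, $q=2$ and $q=\infty$, and then argue monotonically in $q$ via the per-coordinate rank differences, which only take the values $0,1,2$.

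The main obstacle I expect is the uniformity of these counterexamples. For item 2, a $p$-dependent family has to break the Condorcet property for all $p>1$, including $p$ arbitrarily close to $1$, where the rule approaches Kemeny. For items 3 and 4, one profile, or a controlled family, has to work for all $p\ge1$ (respectively all $q\ge1$). For item 2 I would first try exactly the parametrised family above, and then verify the sign of $(k+1)-k(3^p-2^p)$ together with the remaining comparisons among the $6$ permutations.
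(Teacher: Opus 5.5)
\textbf{Item 1.} This is exactly the paper's argument: swap $c_0$ with the candidate just above it, and each voter's Kendall distance moves by $\pm1$.

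\textbf{Item 2.} Your $p$-dependent family is a better route than the paper's fixed five-voter profile. That profile only defeats the Condorcet winner when $3^p-2^p>2$, i.e.\ for $p$ above roughly $1.4$ (the paper only checks $p\ge 2$). So it says nothing for $p$ near $1$, whereas your family is designed to cover every $p>1$. However, the family as you state it fails in two ways.
\begin{itemize}
\item With only $k+1$ voters $a\succ b\succ c$ and $k$ voters $c\succ b\succ a$, the ranking $a\succ c\succ b$ is, like $b\succ a\succ c$, at distance $1$ from the first block and $2$ from the second. The two rankings tie, so the output is $\{a,b\}$, not $b$.
\item Adding $j\ge 1$ voters with $b$ first, as you suggest, destroys the Condorcet property: $a$ against $b$ becomes $k+1$ versus $k+j$.
\end{itemize}
You must add the same number of extra $a\succ b\succ c$ voters. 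Take $k+2$ voters $a\succ b\succ c$, $k$ voters $c\succ b\succ a$ and one voter $b\succ a\succ c$, and write $U$ for $U_{d_K,\Phi,p}$ with $n=2k+3$. Then $a$ still wins its duels ($k+2$ to $k+1$ and $k+3$ to $k$), and
\[
nU(a\succ b\succ c)-nU(b\succ a\succ c)=k(3^p-2^p-1)-1,\qquad nU(a\succ c\succ b)-nU(b\succ a\succ c)=2^p .
\]
The rankings $b\succ c\succ a$, $c\succ a\succ b$ and $c\succ b\succ a$ are worse than $b\succ a\succ c$ by at least $2^{p+1}-1$. Since $3^p-2^p$ is increasing and equals $1$ at $p=1$, any $k>1/(3^p-2^p-1)$ makes $b\succ a\succ c$ the unique mean, for every $p>1$.

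\textbf{Items 3 and 4: the genuine gap.} These are negative statements, so the counterexample is the whole proof, and you give none.
\begin{itemize}
\item Your Hamming/Cayley heuristic cannot be realised with $m=3$. Among three candidates a Condorcet winner $a$ is always ranked first by someone: otherwise the two majority conditions for $a$ add up to $0>2(n_{b\succ c\succ a}+n_{c\succ b\succ a})$.
\item For item 4, checking $q=1,2,\infty$ and then ``arguing monotonically'' is not a proof. Each sum is a combination of $2^{1/q}$, $(2^q+2)^{1/q}$ and $2^{1+1/q}$, and differences of such combinations need not be monotone in $q$.
\end{itemize}
The paper settles both items with one profile: two voters $c_1\succ c_2\succ c_3$, two $c_2\succ c_1\succ c_3$, two $c_3\succ c_1\succ c_2$, and one $c_2\succ c_3\succ c_1$. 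Here $c_1$ wins both duels $4$ to $3$, yet $c_2\succ c_1\succ c_3$ is the unique minimiser.
\begin{itemize}
\item Its Hamming value is $5\cdot 2^p$ and its Cayley value is $5$ (on $\mathfrak{S}_3$, Cayley equals Hamming minus one off the diagonal). These beat all other rankings for every $p\ge 1$.
\item For $p=1$ its Minkowski value $7\cdot 2^{1/q}$ beats every other ranking. The only real competitor is $c_1\succ c_2\succ c_3$, with value $2\cdot 2^{1/q}+3(2^q+2)^{1/q}$.
\end{itemize}
That last comparison is equivalent to $\tfrac12+2^{-q}>(5/6)^q$. It holds for all $q\ge 1$ but is not monotone: the margin dips to about $0.04$ near $q\approx 2.6$. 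This is exactly why interpolating from three values of $q$ is unsafe.

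Incidentally, the Bucklin profile you wanted to adapt needs no adjustment for $q=1$. There $c_2$ is already a Condorcet winner, while the footrule $1$-median $(1,2,3,4)^\top$ elects $c_1$.
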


\begin{proposition} \label{prop:monotonie}
~
\begin{enumerate}[topsep=-\baselineskip]
    \item  The consensus ranking ($p=1$) associated to the Spearman footrule ($1$-Minkowski-Hölder) satisfies the Monotonicity property.
    \item  The consensus ranking ($p=1$) associated to the Hamming distance does not satisfy the Monotonicity property.
\end{enumerate}

\end{proposition}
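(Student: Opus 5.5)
The plan is to recast the footrule consensus as a linear assignment problem. For Spearman footrule and $p=1$,
\[
U_{d_1,\Phi,1}(\sigma)=\frac1n\sum_{c=1}^{\dim} K_\Phi(c,\sigma(c)),
\qquad
K_\Phi(c,r):=\sum_{v=1}^n \lvert r-\Phi(c,v)\rvert .
\]
So the consensus rankings are exactly the minimum-cost perfect matchings between candidates and ranks for the cost matrix $K_\Phi$.

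\textbf{Effect of the move on the cost matrix.} Let $v_0$ swap $c^*$ (rank $\alpha\ge 2$) with $c_0$ (rank $\alpha-1$). Only the rows of $c^*$ and $c_0$ change:
\[
K_{\widetilde\Phi}(c^*,r)-K_\Phi(c^*,r)=\begin{cases}-1 & r\le \alpha-1,\\ +1 & r\ge\alpha,\end{cases}
\qquad
K_{\widetilde\Phi}(c_0,r)-K_\Phi(c_0,r)=\begin{cases}+1 & r\le \alpha-1,\\ -1 & r\ge\alpha.\end{cases}
\]
Every matching uses exactly one entry of each row. So we may add $+1$ to row $c^*$ and $-1$ to row $c_0$ without changing the set of optimal matchings. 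After this normalisation the change becomes: $-2$ on the entries $(c^*,r)$ with $r\le\alpha-1$, and $+2$ on the entries $(c_0,r)$ with $r\le\alpha-1$. For any $\sigma$, the increment is therefore $\Delta(\sigma)=-2\cdot\1\{\sigma(c^*)\le\alpha-1\}+2\cdot\1\{\sigma(c_0)\le\alpha-1\}$.

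\textbf{Showing $c^*$ stays on top.} Take a former optimum $\sigma^*$ with $\sigma^*(c^*)=1$.
\begin{itemize}
\item If $\sigma^*(c_0)\ge\alpha$, then $\Delta(\sigma^*)=-2$. This is the minimal possible increment, so $\sigma^*$ stays optimal. Moreover, any new optimum $\tau$ must also satisfy $\Delta(\tau)=-2$ and be an old optimum.
\item If $\sigma^*(c_0)\le\alpha-1$, then $\Delta(\sigma^*)=0$. Here I will use an exchange argument. Suppose a competitor $\tau$ has $\Delta(\tau)=-2$, so $\tau(c^*)\le\alpha-1<\tau(c_0)$. Build $\tau'$ from $\tau$ by moving $c^*$ to rank $1$, shifting the candidates that $\tau$ ranks above $c^*$ down by one position. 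I will then compare the costs of $\tau'$, $\tau$ and $\sigma^*$, using that each $K_\Phi(c,\cdot)$ is a convex piecewise-linear function of $r$ (a sum of absolute values).
\end{itemize}
The aim is to show that some new optimum still puts $c^*$ first.

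This exchange step is the main obstacle. Convexity of each row and the uncrossing structure of Monge-type cost matrices $K(c,r)=\sum_v\lvert r-\Phi(c,v)\rvert$ should let me show that $\tau'$ does at least as well as $\tau$ under $\widetilde\Phi$. If this fails in general, I would fall back on proving the statement under the paper's convention that the winner is unique. In that setting it is enough that no new optimum places a candidate other than $c^*$ first, and I would obtain this by contradiction with the optimality of $\sigma^*$ for $\Phi$.

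\textbf{Part 2 (Hamming).} Again $U_{d_H,\Phi,1}(\sigma)=\dim-\frac1n\sum_c N_\Phi(c,\sigma(c))$, where $N_\Phi(c,r)$ is the number of voters placing $c$ at rank $r$. The consensus is therefore a maximum-weight assignment on the position-count matrix $N_\Phi$. Because Hamming only rewards exact positions, raising $c^*$ from rank $3$ to rank $2$ in one voter increases $N(c^*,2)$ and can make it optimal to place $c^*$ at rank $2$, with another candidate taking rank $1$. I will exhibit an explicit profile with $\dim=3$ or $4$ and a handful of voters (found by small enumeration), list $N_\Phi$ and $N_{\widetilde\Phi}$, and check by direct comparison of the $\dim!$ assignments that the unique winner changes.
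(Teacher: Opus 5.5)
Your assignment reformulation and the increment $\Delta(\sigma)=-2\cdot\1\{\sigma(c^*)\le\alpha-1\}+2\cdot\1\{\sigma(c_0)\le\alpha-1\}$ are correct. No normalisation is actually needed, because the raw row changes already add up to this; the shift you describe, $+1$ on row $c^*$ and $-1$ on row $c_0$, has the wrong signs for the entries you display. Your first case is complete, and cleaner than the paper's argument.

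The gap is the remaining case, where every consensus of $\Phi$ ranks $c_0$ among the first $\alpha-1$ places, so $\Delta=0$ on all of them. There your exchange $\tau\mapsto\tau'$ is unjustified: $K_\Phi$ has the Monge (uncrossing) property only when the candidates' rank distributions are stochastically ordered, and that fails in general. It is also unnecessary. The missing ingredient, which is the key step of the paper's proof, is parity. Since $\sum_c\lvert\sigma(c)-\tau(c)\rvert\equiv\sum_c(\sigma(c)-\tau(c))=0 \pmod 2$, every footrule distance is even. So any $\tau$ that was not a consensus of $\Phi$ has total cost at least the old optimum plus $2$. The largest possible gain, $\Delta(\tau)=-2$, then only brings it level with $\sigma^*$, whose cost is unchanged. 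Hence $\sigma^*$ remains a consensus and $c^*$ remains \emph{a} winner.

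Your fallback, showing that no new consensus has another candidate on top, cannot work: it is false, and the parity bound is attained. Take candidates $c^*,c_0,a$ and four voters: two with $c^*\succ c_0\succ a$, one with $a\succ c^*\succ c_0$, one with $a\succ c_0\succ c^*$. Write rankings as $(\sigma(c^*),\sigma(c_0),\sigma(a))$. The total footrule costs of $(1,2,3)$, $(1,3,2)$, $(2,1,3)$, $(2,3,1)$, $(3,1,2)$, $(3,2,1)$ are $8,10,12,10,14,10$, so $c^*$ is the unique winner. The last voter ($\alpha=3$) lifts $c^*$ above $c_0$, and the costs become $8,8,12,8,16,12$. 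Now $(2,3,1)$, i.e.\ $a\succ c^*\succ c_0$, is also a consensus, and there is no unique winner. Here $\sigma^*(c_0)=2=\alpha-1$. The competitor $\tau=(2,3,1)$ had old cost exactly $8+2$ with $\Delta(\tau)=-2$, and your $\tau'=(1,3,2)$ merely ties with it, so even a valid exchange lemma would not exclude $\tau$.

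So in this case only the weak form of Monotonicity ($c^*$ stays in the winner set) can be proved. The paper's own proof has exactly this scope: it claims uniqueness only when $\sigma^*(c_0)\ge\alpha$.

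For Part 2 your plan is the paper's: exhibit an explicit profile (the paper uses $5$ voters and $5$ candidates). Your guess that $m=3$ suffices is right. Take voters $c^*\succ c_0\succ a$, $c_0\succ c^*\succ a$, $a\succ c_0\succ c^*$. The unique Hamming consensus is $c^*\succ c_0\succ a$ (total distance $4$, all others at least $5$). After the third voter moves $c^*$ above $c_0$, it is $c_0\succ c^*\succ a$ (again $4$ against at least $5$).
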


Finally, Independence to Losers property is not satisfied by a large category of permutation-based voting rules.

\begin{proposition} \label{prop:IIA}
Hamming-based, Cayley-based, Kendall-based and Minkowski-Hölder based consensus ($p=1$) deepest voting rules do not satisfy Independence to Losers property.
\end{proposition}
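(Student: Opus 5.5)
The plan is to give one explicit counterexample profile and check it against all four distance families. The key simplification is the two-candidate case. On $\S_2=\{\mathrm{id},(1\,2)\}$, each of the distances in the statement (Hamming, Cayley, Kendall, and every $q$-Minkowski-Hölder distance, $q\in[1,\infty]$) is $0$ on the diagonal and takes one fixed positive value on the two distinct permutations. Hence, for $p=1$, $U_{d,\widetilde\Phi,1}(\sigma)$ is proportional to the number of voters whose ranking differs from $\sigma$. So after a candidate is removed and only two remain, every one of these consensus rules reduces to simple majority between the two remaining candidates. It is therefore enough to build a profile on $\dim=3$ candidates $\{a,b,c\}$ such that:
\begin{itemize}
\item the consensus winner is $a$ for each distance;
\item there is a candidate $c_0\neq a$ whose removal leaves a pairwise majority against $a$.
\end{itemize}
Then $a$ cannot win the reduced election, and the Independence to Losers property fails.

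The natural candidate is a Condorcet cycle. Take $n=9$ voters with $4$ voters $a\succ b\succ c$, $3$ voters $b\succ c\succ a$ and $2$ voters $c\succ a\succ b$. The pairwise majorities are $a$ over $b$ ($6$ to $3$), $b$ over $c$ ($7$ to $2$) and $c$ over $a$ ($5$ to $4$). For Kendall, I would list the sums of pairwise disagreements over the six rankings. The ranking $abc$ gives $3+2+5=10$, which is the minimum ($bca$ gives $12$, $cab$ gives $14$, and the three reversed-cycle rankings are larger), so the Kemeny/consensus winner is $a$, consistent with \cref{prop:kemenyyoung}. Removing $b$ leaves $a$ against $c$, and $c$ wins $5$ to $4$. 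This settles the Kendall case.

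For Hamming, Cayley and the Minkowski-Hölder distances, I would compute $U_{d,\Phi,1}(\sigma)=\frac19\sum_v d(\sigma,\Phi(\cdot,v))$ for all six $\sigma\in\S_3$. Only three distinct voter rankings occur, so this is a $6\times 3$ table of distances weighted by $(4,3,2)$. On $\S_3$ these distances depend only on the relative permutation $\sigma\circ\tau^{-1}$ (identity, transposition, or $3$-cycle) and, for Minkowski-Hölder, on the pattern of displacements. The table therefore has few distinct entries, and I expect $abc$ to be the unique minimiser, because it coincides with the largest voter block.

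The main obstacle is that this single profile may not make $a$ the unique winner for every distance simultaneously. For Hamming and Cayley, the $3$-cycles and transpositions have distances $3$ and $2$ respectively, so ties between $abc$ and other rankings are plausible. For Minkowski-Hölder with large $q$, the distances become nearly uniform. If a tie or a different winner appears, I would adjust the multiplicities. Increasing the $a\succ b\succ c$ block while keeping the $c$-over-$a$ majority, for example $5,3,3$, gives $c$ over $a$ by $6$ to $5$, and the strict preference for $abc$ should then follow for each distance. If no single profile works for all distances, I would treat each distance separately with its own multiplicities, keeping the same cycle structure. The two-candidate reduction above makes the conclusion automatic once the three-candidate winner is established.
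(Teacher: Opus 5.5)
Your two-candidate reduction, the Kendall computation and the Condorcet-cycle strategy are all sound. They follow the same mechanism as the paper's proof: a cyclic profile whose consensus winner loses a pairwise majority once a third candidate is dropped. For Kendall the paper only cites that Kemeny fails the property, whereas you check it explicitly.

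Your $4,3,2$ profile also settles Hamming: the sum is $15$ for $abc$ and at least $18$ for every other ranking. It settles every Minkowski--H\"older distance as well. With $A=(2^q+2)^{1/q}$ and $B=2^{1/q}$, $abc$ has sum $5A$ and the closest competitor $bac$ has $11B$. The inequality $5A<11B$ holds for all $q\ge 1$: it reads $20<22$ at $q=1$ and $10<11$ at $q=\infty$. So large $q$ is not a problem.

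The genuine gap is the Cayley case, and the repair you propose (changing multiplicities while keeping the cycle) cannot work. On $\S_3$, $d_C(\sigma,\tau)$ equals $0$ if $\sigma=\tau$, $1$ if $\sigma$ and $\tau$ have different parities, and $2$ otherwise. The three cyclic rankings $abc,bca,cab$ are exactly the even permutations. So for any profile supported on them, each odd permutation has Cayley sum exactly $n$, while an even $\sigma$ has sum $2(n-n_\sigma)$.

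\begin{itemize}
\item With $4,3,2$ you get $9$ for $acb$, $bac$ and $cba$, against $10$ for $abc$. The three transpositions tie, and their top candidates are $a$, $b$ and $c$. So there is no winner and nothing is violated.
\item With $5,3,3$ you get $11$ against $12$, the same outcome.
\item In general, an even $\sigma$ beats the odd rankings only if $n_\sigma>n/2$. But then its top candidate is first for a strict majority, beats everyone pairwise, and survives any removal.
\end{itemize}

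You must mix parities. For instance, take $3$ voters $abc$, $2$ voters $bac$ and $2$ voters $cba$. The Cayley sums are $4$ for $abc$, $7$ for $bac$ and $cba$, $10$ for $bca$ and $cab$, and $11$ for $acb$, so $a$ wins uniquely. Removing $c$ leaves $b$ ahead of $a$ by $4$ to $3$.

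Note that the paper's own Cayley step reuses its cyclic profile and concedes that the winner is not unique there, so it has the same defect.
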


\cref{tab:recap_properties} below summarizes the results of this section.

\begin{table}[!ht]
{\small \hspace{-0.75cm}
\begin{tabular}{lcccccc}\toprule
Distance &  Neutrality & Anonymity & Unanimity & Monotonicity & Ind.Losers        & Condorcet winner \\ 
\midrule
Hamming &   $\checked$  & $\checked$ & $\checked$ & N for $p=1$ & N for $p=1$ & N  \\
Kendall &   $\checked$  & $\checked$ & $\checked$ & .           & N           & $\checked$ only for $p=1$  \\
Cayley &  $\checked$  & $\checked$ & .            & .           & N           & N\\ \hline
Minkowski-Hölder &&&&&& \\
  \hspace{0.5cm } q=1   &   $\checked$  & $\checked$ & $\checked$ &$\checked$  for $p=1$ & N & N for $p=1$ \\
\hspace{0.5cm } $q>1$    &   $\checked$ & $\checked$ & $\checked$ & . & N & N for $p=1$ \\ \bottomrule
\end{tabular}
}
\caption{The table synthesizes if the deepest voting procedure associated to several depth functions the properties given in \cref{def:properties}. Deepest voting rules are $p$-Fréchet means, with different distances on permutations. $\checked$ corresponds to verified properties, N to non verified ones, . to cases where no result was obtained. When results are only obtained for a given value of $p$, it is precised in the corresponding entry.}
\label{tab:recap_properties}
\end{table}

\section{Conclusion}

In this paper, we extended the deepest voting framework to ranking-based voting rules by introducing depth functions defined on the space of permutations. This reinforces links between social choice theory and statistical depth. This approach provides a unified statistical interpretation of several classical social choice procedures as solutions to optimization problems based on distances between rankings.

Note that we focused in this work on distance-based depth functions. Other depth functions can be used, such as half-space or projection depths, that rely more on the geometrical properties of the data. It may be considered in future work.

We showed that well-known rules such as Kemeny, Borda, Plurality and Antiplurality can be expressed as deepest voting procedures associated with appropriate (weighted) distances and Fréchet means. Moreover, we analyzed key axiomatic properties (universality, monotonicity, etc.) of deepest voting procedures.

Beyond these connections, the proposed framework opens the way to the design of new voting rules based on weighted or problem-specific distances between rankings. Future work will focus on a deeper axiomatic analysis or computational aspects.

\section{Acknowledgement}

This work has been supported by ANR France 2030 agency through the PEPR Maths-Vives Condorcet project ANR-24-EXMA-0001.

\bibliography{biblio}

\begin{thebibliography}{25}
\providecommand{\natexlab}[1]{#1}
\providecommand{\url}[1]{\texttt{#1}}
\expandafter\ifx\csname urlstyle\endcsname\relax
  \providecommand{\doi}[1]{doi: #1}\else
  \providecommand{\doi}{doi: \begingroup \urlstyle{rm}\Url}\fi

\bibitem[Aubin et~al.(2022)Aubin, Gannaz, Leoni, and Rolland]{aubin2022deepest}
J.-B. Aubin, I.~Gannaz, S.~Leoni, and A.~Rolland.
\newblock Deepest voting: a new way of electing.
\newblock \emph{Mathematical Social Sciences}, 116:\penalty0 1--16, 2022.

\bibitem[Balinski and Laraki(2007)]{Balinski2007}
M.~Balinski and R.~Laraki.
\newblock A theory of measuring, electing, and ranking.
\newblock \emph{Proceedings of the National Academy of Sciences}, 104\penalty0
  (21):\penalty0 8720--8725, 2007.

\bibitem[Brams and Fishburn(2007)]{Brams}
S.~Brams and P.~C. Fishburn.
\newblock \emph{Approval voting}.
\newblock Springer, 2007.

\bibitem[Deza and Deza(2016)]{deza2016encyclopedia}
M.~Deza and E.~Deza.
\newblock \emph{Encyclopedia of Distances}.
\newblock Springer Berlin Heidelberg, 2016.
\newblock ISBN 9783662528440.
\newblock URL \url{https://books.google.fr/books?id=KQHdDAAAQBAJ}.

\bibitem[Deza and Huang(1997)]{dist.permut}
M.~Deza and T.~Huang.
\newblock Metrics on permutations, a survey.
\newblock \emph{J. Comb. Inf. Sys. Sci.}, 23, 02 1997.

\bibitem[Diaconis(1988)]{Diaconis88}
P.~Diaconis.
\newblock Group representations in probability and statistics.
\newblock \emph{Lecture Notes-Monograph Series}, 11:\penalty0 i--192, 1988.
\newblock ISSN 07492170.
\newblock URL \url{http://www.jstor.org/stable/4355560}.

\bibitem[Dort and friends(2025)]{Dort.et.al}
L.~Dort and friends.
\newblock Approval voting.
\newblock \emph{to be written}, 2025.

\bibitem[Elkind et~al.(2009)Elkind, Faliszewski, and Slinko]{ElkindTARK}
E.~Elkind, P.~Faliszewski, and A.~Slinko.
\newblock On distance rationalizability of some voting rules.
\newblock In \emph{Proceedings of the 12th Conference on Theoretical Aspects of
  Rationality and Knowledge}, TARK '09, page 108–117, New York, NY, USA,
  2009. Association for Computing Machinery.

\bibitem[Elkind et~al.(2010)Elkind, Faliszewski, and Slinko]{Elkind2010OnTR}
E.~Elkind, P.~Faliszewski, and A.~M. Slinko.
\newblock On the role of distances in defining voting rules.
\newblock In \emph{Adaptive Agents and Multi-Agent Systems}, 2010.
\newblock URL \url{https://api.semanticscholar.org/CorpusID:11697212}.

\bibitem[Felsenthal and Nurmi(2018)]{Felsenthal2018}
D.~S. Felsenthal and H.~Nurmi.
\newblock \emph{Voting Procedures for Electing a Single Candidate}.
\newblock SpringerBriefs in Economics. Springer, 2018.
\newblock \doi{10.1007/978-3-319-74033-1_3}.

\bibitem[Felsenthal and Nurmi(2019)]{Felsenthal}
D.~S. Felsenthal and H.~Nurmi.
\newblock \emph{20 Voting Procedures Designed to Elect a Single Candidate},
  pages 5--16.
\newblock Springer International Publishing, Cham, 2019.
\newblock ISBN 978-3-030-12627-8.
\newblock \doi{10.1007/978-3-030-12627-8_2}.

\bibitem[Fr\'echet(1948)]{Frechet}
M.~Fr\'echet.
\newblock Les \'el\'ements al\'eatoires de nature quelconque dans un espace
  distanci\'e.
\newblock \emph{Annales de l'institut Henri Poincar\'e}, 10\penalty0
  (4):\penalty0 215--310, 1948.
\newblock URL \url{http://www.numdam.org/item/AIHP_1948__10_4_215_0/}.

\bibitem[Goibert et~al.(2022)Goibert, Clemencon, Irurozki, and
  Mozharovskyi]{Goibert}
M.~Goibert, S.~Clemencon, E.~Irurozki, and P.~Mozharovskyi.
\newblock Statistical depth functions for ranking distributions: Definitions,
  statistical learning and applications.
\newblock In \emph{International Conference on Artificial Intelligence and
  Statistics}, pages 10376--10406. PMLR, 2022.

\bibitem[Kumar and Vassilvitskii(2010)]{weighted_distances}
R.~Kumar and S.~Vassilvitskii.
\newblock Generalized distances between rankings.
\newblock In \emph{Proceedings of the 19th International Conference on World
  Wide Web}, WWW '10, page 571–580, New York, NY, USA, 2010. Association for
  Computing Machinery.
\newblock ISBN 9781605587998.
\newblock \doi{10.1145/1772690.1772749}.
\newblock URL \url{https://doi.org/10.1145/1772690.1772749}.

\bibitem[Mahajne et~al.(2015)Mahajne, Nitzan, and Volij]{MahajneNV15}
M.~Mahajne, S.~Nitzan, and O.~Volij.
\newblock Level {\textbackslash}(r{\textbackslash}) consensus and stable social
  choice.
\newblock \emph{Soc. Choice Welf.}, 45\penalty0 (4):\penalty0 805--817, 2015.

\bibitem[McCabe-Dansted and Slinko(2006)]{McCabe2006}
J.~McCabe-Dansted and A.~Slinko.
\newblock Exploratory analysis of similarities between social choice rules.
\newblock \emph{Group Decision and Negotiation}, 15:\penalty0 77--107, 12 2006.

\bibitem[Meskanen and Nurmi(2008)]{Meskanen2008}
T.~Meskanen and H.~Nurmi.
\newblock \emph{Closeness Counts in Social Choice}, pages 289--306.
\newblock Springer Berlin Heidelberg, Berlin, Heidelberg, 2008.

\bibitem[Mosler(2013)]{mosler2013depth}
K.~Mosler.
\newblock Depth statistics.
\newblock \emph{Robustness and complex data structures: Festschrift in Honour
  of Ursula Gather}, pages 17--34, 2013.

\bibitem[Nitzan(1981)]{Nitzan81}
S.~Nitzan.
\newblock Some measures of closeness to unanimity and their implications.
\newblock \emph{Theory and Decision}, 13\penalty0 (2):\penalty0 129--138, 1981.

\bibitem[Smith(2000)]{Smith}
W.~D. Smith.
\newblock Range voting.
\newblock \url{http://rangevoting.org/RangeVoting.html}, 2000.
\newblock Accessed: 2014-10-12.

\bibitem[Tukey(1975)]{Tukey}
J.~W. Tukey.
\newblock Mathematics and the picturing of data.
\newblock In \emph{Proceedings of the International Congress of Mathematicians,
  Vancouver, 1975}, volume~2, pages 523--531, 1975.

\bibitem[Young and Levenglick(1978)]{young1978consistent}
H.~P. Young and A.~Levenglick.
\newblock A consistent extension of condorcet’s election principle.
\newblock \emph{SIAM Journal on applied Mathematics}, 35\penalty0 (2):\penalty0
  285--300, 1978.

\bibitem[Zuo(2003)]{Zuo2003}
Y.~Zuo.
\newblock Projection-based depth functions and associated medians.
\newblock \emph{Annals of Statistics}, 31\penalty0 (5):\penalty0 1460--1490,
  2003.

\bibitem[Zuo(2004)]{Zuo2004}
Y.~Zuo.
\newblock Robustness of weighted {$L^p$}--depth and {$L^p$}--median.
\newblock \emph{Allgemeines Statistisches Archiv}, 88\penalty0 (2):\penalty0
  215--234, 2004.

\bibitem[Zuo and Serfling(2000)]{ZuoSerfling}
Y.~Zuo and R.~Serfling.
\newblock General notions of statistical depth function.
\newblock \emph{Annals of statistics}, pages 461--482, 2000.

\end{thebibliography}
\bibliographystyle{abbrvnat}

\appendix

\section{Proofs for unweighted distances}
\label{sec:proofs_unweighted}

This section provides the proofs of \cref{sec:unweighted}, dealing with unweighted deepest voting on permutations.

\subsection*{Proof of \cref{prop:kemenyyoung}}

Quoting  \cite{Felsenthal}, p. 26, \begin{quote}Kemeny‘s
social choice procedure can also be viewed as finding the most likely (or the best predictor, or the best compromise) true social
preference ordering, called the median preference ordering, i.e., that social preference ordering $S$ that minimizes the
sum, over all voters $i$, of the number of pairs of candidates that are ordered oppositely by $S$ and by the $i^{th}$ voter.\end{quote} It is exactly the definition of the consensus ranking deepest voting rule with Kendall distance.

\subsection*{Proof of \cref{prop:borda}.}

We consider here that the Borda voting rule elects a unique candidate. 

Suppose that $c^*$ is chosen with the Borda voting rule. It means that 
\begin{equation}
    \label{eq:borda}
\forall c\in\mathcal{C},\ c\neq c^*, \ \sum_{v=1}^n\sigma_{v}(c^*)<\sum_{v=1}^n\sigma_{v}(c),
\end{equation}
 where $\sigma_v(c)=e_{cv}$ denotes the rank given by voter $v$ to candidate $c$. Then $\{\sigma_{v})_{c}\}$ is a finite family of cardinal $n\ge2$. Denote $\Pi_n$ its empirical distribution.

Let $\sigma^*\in\S_\dim$ a permutation and $c^* \in \{1, \ldots, m\}$ such that $\sigma^*(c^*)=1$.

Let $c_0\in\mathcal{C}$, $c_0\neq c^*$. Note $\tau=(c^*,c_0)$ the transposition exchanging $c^*$ and $c_0$. Let us show that $\sigma^*$ has a lower 2-Frechet mean than $\sigma^* \circ \tau$. That is, we want to prove that 
\[\mathbb{E}[d^2(\sigma^*,\Pi_n)]<\mathbb{E}[d^2(\sigma^* \circ \tau,\Pi_n)],\]
with $d()$ the $q$-Minkowski-Holder distance for $q=2$. 

Using successively the definition of $\Pi_n$ and the definition of the distance $d()$, we have \begin{align*}
   {\mathbb{E}[d^2(\sigma^*,\Pi_n)]-\mathbb{E}[d^2(\sigma^* \circ \tau,\Pi_n)]} 
     & =  \frac{1}{n}\sum_{v=1}^nd^2(\sigma^*,\sigma_{v})-\frac{1}{n}\sum_{v=1}^nd^2(\sigma^* \circ  \tau,\sigma_{v}) \\ 
    &=  \frac{1}{n}  \sum_{v=1}^n \sum_{c=1}^\dim  (\sigma^*(c)-\sigma_{v}(c))^{2} -\frac{1}{n}  \sum_{v=1}^n  \sum_{c=1}^\dim (\sigma^*\circ \tau (c)-\sigma_{v}(c))^{2}.
\end{align*}
Definition of $\tau$ yields  
\begin{align*}
   \lefteqn{\mathbb{E}[d^2(\sigma^*,\Pi_n)]-\mathbb{E}[d^2(\sigma^* \circ \tau,\Pi_n)]} \\
 &=   \frac{1}{n}\sum_{v=1}^n \left[  (\sigma^*(c^*)-\sigma_{v}(c^*))^{2}+(\sigma^*(c_0)-\sigma_{v}(c_0))^{2} 
 -(\sigma^*(c_0)-\sigma_{v}(c^*))^{2}-(\sigma^*(c^*)-\sigma_{v}(c_0))^{2}  \right] \\
    & = \frac{2}{n}\sum_{v=1}^n (\sigma^*(c^*)-\sigma^*(c_0))( \sigma_{v}(c_0)-\sigma_{v}(c^*)) \\
    & = \frac{2}{n}(\sigma^*(c^*)-\sigma^*(c_0))\Bigl(\sum_{v=1}^n \sigma_{v}(c_0)-\sum_{v=1}^n\sigma_{v}(c^*)\Bigr).
\end{align*}
Using Equation \eqref{eq:borda} and the fact that $\sigma^*$ was defined such that $\sigma^*(c^*)-\sigma^*(c_0)<0$, we deduce that the right hand side is strictly negative.
Hence the deepest permutations are such that $\sigma^*(c^*)=1$, and therefore the deepest voting rule elects the winner of Borda rule.

\subsection*{Proof of \cref{prop:bucklin}}

Consider a case with 5 voters and 4 candidates, with the voting table is proposed in \cref{tab:propbucklin} below.
\begin{table}[!ht]
    \centering
    \begin{tabular}{cccccc} \toprule
            & $v_1$ & $v_2$ & $v_3$ & $v_4$ & $v_5$ \\ \midrule
        $c_1$ &   1 & 1 & 4 & 4 & 3\\
        $c_2$ &  2 & 2 & 2 & 2 & 2\\
        $c_3$ &  3 & 3 & 3 & 3 & 1\\
        $c_4$ &  4 & 4  & 1 & 1 & 4 \\ \bottomrule
    \end{tabular}
    \caption{Matrix $\Phi$ of the voting situation for the proof of \cref{prop:bucklin}.}
    \label{tab:propbucklin}
\end{table}
In this setting, the median rankings are (3,2,3,4) and therefore $c_2$ is the winner of Bucklin's vote as it has the smallest median ranking.
But the optimum of the $p$-Fréchet mean with the Spearman footrule distance is $(1,2,3,4)^\top$, for both cases $p=1$ and $p=2$. Therefore $c_1$ is the winner of $DV(1, \mbox{Spearman footrule})$, and $DV(2, \mbox{Spearman footrule})$. This concludes the proof.

\section{Proofs for weighted distances}
\label{sec:proofs_weighted}

The proofs dealing with results on depths based on weighted permutation distance, given in \cref{sec:weighted}, are displayed in this section.

\subsection*{Proof of \cref{prop:plurality}}

    The weighted Hamming distance with the weights $W(1,(1))$ writes as
    $$ \text{for all }\sigma,\tau\in\S_\dim,\; d(\sigma, \tau)=\begin{cases}
    2 & \text{if } \sigma^{-1}(1) \neq \tau^{-1}(1),\\
    0 & \text{if } \sigma^{-1}(1) = \tau^{-1}(1).\end{cases}$$ Therefore, $\sigma^* \in \S_\dim$ minimizes $\sum_{v=1}^n d(\sigma, \sigma_v)$ if  $$\sigma^{*-1}(1)= \argmax_{c = 1, \ldots, \dim} \sum_{v=1}^n \1(\sigma^{-1}_v(1)=c).$$ It concludes the proof since it is exactly the definition of Plurality voting rule.

\subsection*{Proof of \cref{prop:antiplurality}}

    The weighted Hamming distance with the weights $W({-1,(\dim)})$ writes as
    $$ \text{for all }\sigma,\tau\in\S_\dim,\; d(\sigma, \tau)=\begin{cases}
    -1 & \text{if } \sigma^{-1}(\dim) \neq \tau^{-1}(\dim),\\
    0 & \text{if } \sigma^{-1}(\dim) = \tau^{-1}(\dim).\end{cases}$$
    Therefore, $\sigma^* \in \S_\dim$ minimizes $\sum_{v=1}^n d(\sigma^*, \sigma_v)$, if  $$\sigma^{*-1}(\dim)= \argmin_{c = 1, \ldots, \dim} \sum_{v=1}^n \1((\sigma_v^{-1}(c)=\dim)).$$  It concludes the proof since it is exactly the definition of Antiplurality voting rule.

\section[Proofs of voting properties]{Proofs of \cref{sec:properties}}
\label{sec:proofs_properties}
 
 This section provides the proofs dealing with the properties of the deepest voting procedures, stated in \cref{sec:properties}. The properties are defined in \cref{def:properties}.
   
\subsection*{Proof of \cref{prop:NAU}}

\begin{description}[font=\bf]
\item[\emph{Anonymity}.]  For continuous depth functions, the property has been proven in \cite[Proposition 1]{aubin2022deepest}. Consider a depth on permutations, defined in \cref{def:depth_Goibert}. For any voting situation $\Phi$, with empirical distribution $\Pi_n$, the deepest voting relates only on the function which associate a permutation $\sigma\in\S_m$ to $U_{d,\Pi_n,p}(\sigma)=\frac{1}{n}\sum_{v=1}^n d^p(\sigma,\Phi(.,v))$.
It is straightforward that, $U_{d,\Pi_n,p}()$ is not modified by permuting the columns of $\Phi$ (\emph{i.e.} permuting the voters), and hence the deepest voting procedure remains identical.  
\item[\emph{Neutrality}.] 
For continuous depth functions, the property has been proven in \cite[Proposition 1]{aubin2022deepest}. Consider a depth on permutations, defined in \cref{def:depth_Goibert}. Similarly, for any voting situation $\Phi$, with empirical distribution $\Pi_n$, the deepest voting relates only on the function which associate a permutation $\sigma\in\S_m$ to $U_{d,\Pi_n,p}(\sigma)=\frac{1}{n}\sum_{v=1}^n d^p(\sigma,\Phi(.,v))$.
Permuting the rows of $\Phi$ (\emph{i.e.} permuting the candidates) by a permutation $\tau$, $d(\tau\circ\sigma,\Phi(\tau(.),v))=d(\sigma,\Phi(.,v))$. Hence the deepest voting procedure provides the same argmax set, up to the permutation $\tau$, and consequently the same candidate will be winning.
\item[\emph{Universality}] Universality is obvious as the definition domain $\E_D$ of the depth functions is $\dim$-dimensional.
\end{description}

\subsection*{Proof of \cref{prop:una}}

Recall that Unanimity means that if there exists $c^* \in \{1, \ldots, \dim\}$ such that $\forall v \in \{1,\ldots n\}$, $\Phi(c^*,v)=1$, then $\sigma^*(c^*)=1$, with $\sigma^*$ the deepest permutation. 

Suppose that all voters agree on the same best candidate, \emph{i.e.} suppose that   $\exists c^* \in \{1, \ldots, \dim\}$ such that $\forall v \in \{1,\ldots n\}$, $\Phi(c^*,v)=1$. 

Let $\sigma\in\S_\dim$ such that $\sigma(c^*)\neq 1$. Without loss of generality, consider $c^\ast=1$ and $\sigma(2)=1$. Consider $\tau$ the permutation such that $\tau(1)=1$, $\tau(2)=\sigma(1)$, and $\tau(c)=\sigma(c)$ if $c=3,\dots,\dim$. Let us prove that then $\tau$ as a lower $p$-Fréchet mean than $\sigma$.

Denote
$$\Delta:=\frac{1}{n}\sum_{v=1}^n \Bigl(d^p(\sigma,\sigma_v)-d^p(\tau,\sigma_v)\Bigr)$$
the difference of the $p$-Fréchet means of $\sigma$ and $\tau$. Let us prove that $\Delta\geq 0$. More precisely, we are going to prove that for all $v\in\{1,\dots,n\}$ we have $d(\sigma,\sigma_v)>d(\tau,\sigma_v)$.

Let us distinguish with respect to the distance.

\begin{description}

    \item[For the Hamming distance.] Let $v\in\{1,\dots,n\}$. Since $\sigma(1)\neq\sigma_v(1)$, $\sigma_2\neq\sigma_v(2)$, and $\tau(1)\neq\sigma_v(1)$, we have \begin{align*}
        d(\sigma, \sigma_v) &= 2 + \sum_{c=3}^\dim \1\{\sigma(c) \neq \sigma_v(c)\},\\
        d(\tau, \sigma_v) & = \1\{\sigma(1) \neq \sigma_v(2)\} + \sum_{c=3}^\dim \1\{\sigma(c) \neq \sigma_v(c)\}.
    \end{align*} 
    With these equations, it is straightforward that $d(\tau,\sigma_v)<d(\sigma,\sigma_v)$.

    \item[For the Kendall distance.] 
    Let $v\in\{1,\dots,n\}$. By definition of the Kendall distance, 
    $$d(\sigma,\sigma_v)=
    \sum_{c=1}^{\dim-1}\sum_{c'=c+1}^{\dim}\1\{(\sigma(c)-\sigma(c'))\,(\sigma_v(c)-\sigma_v(c'))< 0\}.$$
    Using the properties of $\sigma$ and $\tau$, and decomposing the sum on candidates with respect to candidate $c=1$, candidate $c=2$ and candidates $c\geq 3$,
    \begin{align*}
        d(\sigma,\sigma_v) & =
    \sum_{c'=2}^{\dim}\1\{(\sigma(1)-\sigma(c'))> 0\} +
    \sum_{c'=3}^{\dim}\1\{(\sigma_v(2)-\sigma_v(c'))> 0\} \\
    &\quad +
    \sum_{c=3}^{\dim-1}\sum_{c'=c+1}^{\dim}\1\{(\sigma(c)-\sigma(c'))\,(\sigma_v(c)-\sigma_v(c'))< 0\},\\
    d(\tau,\sigma_v) & = \sum_{c'=3}^{\dim}\1\{(\sigma(1)-\sigma(c'))(\sigma_v(2)-\sigma_v(c'))< 0\} \\
    &\quad+
    \sum_{c=3}^{\dim-1}\sum_{c'=c+1}^{\dim}\1\{(\sigma(c)-\sigma(c'))\,(\sigma_v(c)-\sigma_v(c'))< 0\}.
    \end{align*}
    Hence $d(\sigma,\sigma_v)-d(\tau,\sigma_v)=\sum_{c'=2}^{\dim}\1\{(\sigma(1)-\sigma(c'))> 0\} + \delta_v$ with \begin{align*}
           \delta_v & = \sum_{c'=3}^{\dim}\Bigl(\1\{(\sigma(1)-\sigma_v(c'))> 0\}+\1\{(\sigma_v(2)-\sigma_v(c'))> 0\} \\ 
           &\qquad\qquad - \1\{(\sigma(1)-\sigma(c'))(\sigma_v(2)-\sigma_v(c'))< 0\}\Bigr).
        \end{align*}
    It is straightforward that each term of the sum above cannot be negative (since if $\1\{(\sigma(1)-\sigma_v(c'))> 0\}+\1\{(\sigma_v(2)-\sigma_v(c'))> 0\}=0$, then $\1\{(\sigma(1)-\sigma(c'))(\sigma_v(2)-\sigma_v(c'))< 0\}=0$). It results that $$d(\sigma,\sigma_v)-d(\tau,\sigma_v)\geq \sum_{c'=2}^{\dim}\1\{(\sigma(1)-\sigma(c'))> 0\}.$$   
    Since $\sigma(1)>\sigma(2)=1$, we deduce that  $d(\sigma,\sigma_v)>d(\tau,\sigma_v)$.

\item[For $q$-Minkowski-Hölder distances, $1\leq q<\infty$.]
Let $v\in\{1,\dots,n\}$. 
First observe that:
\begin{align*}
    d_q^q(\sigma,\sigma_v)&=\sum_{c=1}^\dim |\sigma(c)-\sigma_v(c)|^q  \\
    &=|\sigma(1)-1|^q+|1-\sigma_v(2)|^q+\sum_{c=3}^\dim |\sigma(c)-\sigma_v(c)|^q.
\end{align*}
Similarly,
\begin{align*}
    d_q^q(\tau,\sigma_v)&=|\sigma(1)-\sigma_v(2)|^q+\sum_{c=3}^\dim |\sigma(c)-\sigma_v(c)|^q.
\end{align*}
Hence, $$d_q^q(\sigma,\sigma_v)-d_q^q(\tau,\sigma_v)=|\sigma(1)-1|^q+|1-\sigma_v(2)|^q-|\sigma(1)-\sigma_v(2)|^q.$$ Next,
\begin{itemize}
    \item if $1<\sigma(1)\leq \sigma_v(2)$, $d_q^q(\sigma,\sigma_v)-d_q^q(\tau,\sigma_v)\geq (\sigma(1)-1)^q>0$;
    \item if $1<\sigma_v(2)\leq \sigma(1)$, $d_q^q(\sigma,\sigma_v)-d_q^q(\tau,\sigma_v)\geq (\sigma_v(2)-1)^q>0$.
\end{itemize} 
Hence, $d_q(\sigma,\sigma_v)>d_q(\tau,\sigma_v)$.
This concludes the proof.

\item[For $q$-Minkowski-Hölder distances, $q=\infty$.] \hfill \break
Here we show that when there is unique solution $\sigma^*$ then $p$-Fréchet mean of $\infty$-Minkowski-Hölder satisfies Unanimity and otherwise that's not always the case. \newline
Let's recall that :
$$\Delta:=\frac{1}{n}\sum_{v=1}^n \Bigl((\max_{c} |\sigma(c)-\sigma_v(c)|)^p-(\max_{c} |\tau(c)-\sigma_v(c)|)^p\Bigr).$$
Then 
    \begin{align} \Delta= & \frac{1}{n}\sum_{v=1}^n \Bigl((\max(\sigma(1)-\sigma_v(1),\sigma(2)-\sigma_v(1), \max_{c>2} |\sigma(c)-\sigma_v(c)|))^p \\ & -(\max(\tau(1)- \sigma_v(1),\tau(2)-\sigma_v(2), \max_{c>2} |\tau(c)-\sigma_v(c)|))^p\Bigr).\\
\end{align}
Since $\sigma(2)=1$, $\forall v, \sigma_v(1)=1$, $\tau(1)=\sigma(2)=1$ and $\tau(2)=\sigma(1)>1$, 
$$\Delta=\frac{1}{n}\sum_{v=1}^n \Bigl((\max(\sigma(1)-1,\sigma_v(2)-1, \max_{c>2} |\sigma(c)-\sigma_v(c)|))^p-(\max(\tau(2)-\sigma_v(2), \max_{c>2} |\tau(c)-\sigma_v(c)|))^p\Bigr).$$
Now, considering that $\max_{c>2} |\sigma(c)-\sigma_v(c)| = \max_{c>2} |\tau(c)-\sigma_v(c)|$ and $\sigma(1)-1 > \tau(2)-\sigma_v(2)$, it leads that for all $p \geq 1$
$$ (\max(\sigma(1)-1,\sigma_v(2)-1, \max_{c>2} |\sigma(c)-\sigma_v(c)|))^p \geq (\max(\tau(2)-\sigma_v(2), \max_{c>2} |\tau(c)-\sigma_v(c)|)),$$
and $\Delta \geq 0$. Note that it doesn't mean that $\Delta \neq 0$ but, necessarily a ranking with $c^*$ in first place minimises the $p$-Fréchet mean for the $L^\infty$ distance. Thus, if the solution is unique then that is this ranking. Otherwise, counter-example below shows that there is no winner therefore unanimity property is not satisfied.

Consider a setting with 5 voters and 6 candidates, with the voting preferences given are according to \cref{tab:condorcetLinfini}. 

\begin{minipage}{\linewidth}
    \centering
    \begin{tabular}{cccccc} \toprule
            & $v_1$ & $v_2$ & $v_3$ & $v_4$ & $v_5$   \\ \midrule
        $c_1$ &   1 & 1 & 1 & 1 & 1\\
        $c_2$ &  4 & 6 & 4 & 2 & 6\\
        $c_3$ &  6 & 4 & 6 & 4 & 5\\
        $c_4$ &   3 & 5 & 3 & 6 & 4\\
        $c_5$ &  5 & 2 & 5 & 5 & 3\\
        $c_6$ &  2 & 3 & 2 & 3 & 2\\ \bottomrule
    \end{tabular}
    \captionof{table}{Counter-example for the proof that $p$-Fréchet mean of $\infty$-Minkowski-Hölder distance may not satisfy unanimity if there are many solutions.}
    \label{tab:condorcetLinfini}
\end{minipage}

The 1-Fréchet means are (1,4,6,3,5,2), (1,5,6,3,4,2) and (2,5,6,3,4,1). 
Both the 2-Fréchet means and the 3-Fréchet means are (1,5,6,3,4,2) and (2,5,6,3,4,1). Hence, the winner is not unique, and candidate $c_6$ is also a preferred candidate in the deepest set.

\end{description}

\subsection*{\bf Proof of \cref{prop:condowinner}}

\begin{enumerate}
\item \textbf{Kendall -based $p$-Fréchet  for $p=1$} \newline
Let $ \Phi=(\sigma_v)_{v=1,\dots,n} \in \S_m^n$ be the set of rankings given by $n$ voters on $\dim$ candidates.
Let $\sigma^*$ be a median permutation of $\Phi$ based on the Kendall distance $d_K$. That is, we consider consensus ranking, with $\sigma^*\in \argmin_{\sigma  \in \S_\dim} \ \sum_{v=1}^n d_K(\sigma,\sigma_v)$.

Suppose that $c_0$ is a Condorcet winner of the election and that $c_0$ is not ranked first in $\sigma^*$. Thus $\sigma^*(c_0)=r \neq 1$. There exists $c_1 \in \C$ such that $\sigma^*(c_1)=r-1$. Let $\tau \in \S_\dim$ be such that $\tau=\sigma^*$ except that $\tau(c_0)=r-1$ and $\tau(c_1)=r$.

Observe that for all $s \in \S_\dim$, $$d_K(\sigma^*,s)=d_K(\tau,s)+\delta_s(c_0,c_1),$$ where $\delta_s(c_0,c_1)=1$ if $s(c_0)<s(c_1)$ and $\delta_s(c_0,c_1)=-1$ if $s(c_0)>s(c_1)$. Therefore, $$\sum_{v=1}^n d_K(\sigma^*,\sigma_v)=\sum_{v=1}^n d_K(\tau,\sigma_v)+ \sum_{v=1}^n \delta_{\sigma_v}(c_0,c_1).$$
As $c_0$ is a Condorcet winner, we have more elements in the set $\{v \in \{1,\dots n\},\; \sigma_v(c_0)<\sigma_v(c_1)\}$ than in the set  $ \{v \in \{1,\dots n\},\; \sigma_v(c_0)<\sigma_v(c_1)\}$. This implies that $ \sum_{v=1}^n \delta_{\sigma_v}(c_0,c_1)>0$  and thus that $\sum_{v=1}^n  d_K(\sigma^*,\sigma_v)>\sum_{v=1}^n d_K(\tau,\sigma_v)$. Hence, $\sigma^*$ cannot be a median. Therefore, by absurd, it means that $\sigma^*(c_0)=1$.

So the consensus Kendall-based voting rule satisfies the Condorcet-winner property. Note that it is well-known that Kemeny voting satisfies Condorcet-winner property.

\item \textbf{Kendall-based $p$-Fréchet  for $p>1$,  Cayley, Hamming -based $p$-Fréchet  for $p\geq1$ and $q$-Minkowski-Hölder -based $p$-Fréchet  for $p=1$}
\begin{itemize}
    \item \textit{Kendall-based $p$-Fréchet  for $p>1$} \newline
Consider a setting with 5 voters and 3 candidates, with the voting preferences given are according to \cref{tab:condorcetKendall}. 

\begin{minipage}{\linewidth}
    \centering
    \begin{tabular}{cccccc} \toprule
            & $v_1$ & $v_2$ & $v_3$ & $v_4$ & $v_5$   \\ \midrule
        $c_1$ &   1 & 1 & 1 & 3 & 2\\
        $c_2$ &  2 & 2 & 2 & 2 & 1\\
        $c_3$ &  3 & 3 & 3 & 1 & 3\\  \bottomrule
    \end{tabular}
    \captionof{table}{Counter-example for the proof of point 2. (Kendall) of \cref{prop:condowinner}.}
    \label{tab:condorcetKendall}
\end{minipage}
 
 $c_1$ is obviously the Condorcet winner of the election. Then, for $p \geq 1$, 
\begin{align*}
\displaystyle \sum_{v=1}^5 d_K^p((c_1,c_2,c_3), v_v) & = 1+3^p\\
\sum_{v=1}^5 d_K^p((c_1,c_3,c_2), v_v) & = 3 + 2 \times 2^p\\
\sum_{v=1}^5 d_K^p((c_2,c_1,c_3), v_v) & = 3 + 2^p\\
\sum_{v=1}^5 d_K^p((c_2,c_3,c_1), v_v) & = 1 + 4 \times 2^p\\
\sum_{v=1}^5 d_K^p((c_3,c_1,c_2), v_v) & = 2 + 3 \times 2^p\\
\sum_{v=1}^5 d_K^p((c_3,c_2,c_1), v_v) & = 2^p + 3\times 3^p\\
\end{align*}

If $p=1$, $c_1$ is the winner of Kendall consensus ranking.
 
 But  $\forall p\geq 2$, $\sum_{v=1}^5 d_K((c_1,c_2,c_3), v_v)^p > \sum_{v=1}^5 d_K((c_2,c_1,c_3), v_v)^p$ and then $c_2$ is the winner of the election with the voting rule (Kendall,$p$) for $p \geq 2$. Other calculations are left to the reader.

\end{itemize}

\item For the following with Cayley and Hamming, consider a setting with 7 voters and 3 candidates, with the voting preferences given are according to \cref{tab:condorcet}.

\begin{minipage}{\linewidth}
    \centering
    \begin{tabular}{cccccccc} \toprule
            & $v_1$ & $v_2$ & $v_3$ & $v_4$ & $v_5$ & $v_6$ & $v_7$  \\ \midrule
        $c_1$ &   1 & 1 & 2 & 2 & 2 & 2 & 3\\
        $c_2$ &  2 & 2 & 1 & 1 & 3 & 3 & 1\\
        $c_3$ &  3 & 3 & 3 & 3 & 1 & 1 & 2\\  \bottomrule
    \end{tabular}
    \captionof{table}{Counter-example for the proof of point 2 of \cref{prop:condowinner}.}
    \label{tab:condorcet}
\end{minipage}

In this setting, the Condorcet winner is candidate $c_1$ as 4 voters prefer $c_1$ to $c_2$ and 4 voters prefer $c_1$ to $c_3$.

\begin{itemize}
\item \textit{Hamming-based $p$-Fréchet  for $p\geq1$}
With the voting preferences given are according to \cref{tab:condorcet},
 \begin{align*}
\displaystyle \sum_{j=1}^7 d_H^p((1,2,3)^T,v_j) & =  2^p + 2^p + 3^p + 3^p + 3^p\\
\displaystyle \sum_{j=1}^7 d_H^p((2,1,3)^T,v_j) & =  2^p + 2^p + 2^p + 2^p + 2^p\\
\displaystyle \sum_{j=1}^7 d_H^p((2,3,1)^T,v_j) & =  3^p + 3^p + 2^p + 2^p + 3^p\\
\displaystyle \sum_{j=1}^7 d_H^p((3,1,2)^T,v_j) & =  3^p + 3^p + 2^p + 2^p + 3^p + 3^p\\
\end{align*}
Other calculations are left to the reader and we note that $(2,1,3)^T$ minimizes the Hamming-based $p$-Fréchet mean for $p\geq1$ and then $c_2$ is the winner of the election. \\

 \item \textit{Cayley-based $p$-Fréchet  for $p\geq1$}
 
With the voting preferences given are according to \cref{tab:condorcet}, 
  \begin{align*}
\displaystyle \sum_{j=1}^7 d_C^p((1,2,3)^T,v_j) & =  1   + 1   + 2^p + 2^p + 2^p\\
\displaystyle \sum_{j=1}^7 d_C^p((2,1,3)^T,v_j) & =  1   + 1   + 1   + 1   + 1\\
\displaystyle \sum_{j=1}^7 d_C^p((2,3,1)^T,v_j) & =  2^p + 2^p + 1   + 1   + 2^p\\
\displaystyle \sum_{j=1}^7 d_C^p((3,1,2)^T,v_j) & =  2^p + 2^p + 1   + 1   + 2^p + 2^p\\
\end{align*}
Other calculations are left to the reader and we note that $(2,1,3)^T$ minimizes the Cayley-based $p$-Fréchet mean for $p\geq1$ and then $c_2$ is the winner of the election.
 \end{itemize}

 \item For \textit{$q$-Minkowski-Hölder based $p$-Fréchet  for $p=1$}, let's consider the previous counter example.
 
 One can show that, for $q \geq 1$, and the preferences expressed in table \ref{tab:condorcet}:
\begin{align}
\displaystyle \sum_{j=1}^7 d_q((1,2,3)^T,v_j) & =  2 \times 2^{1/q}  + 3 \times (2^q+2)^{1/q} +  0 \times 2^{(q+1)/q}\\
\displaystyle \sum_{j=1}^7 d_q((2,1,3)^T,v_j) & =  3 \times 2^{1/q}  + 0 \times (2^q+2)^{1/q} +  2 \times 2^{(q+1)/q}\\
\displaystyle \sum_{j=1}^7 d_q((1,3,2)^T,v_j) & =  4 \times 2^{1/q}  + 2 \times (2^q+2)^{1/q} +  1 \times 2^{(q+1)/q}\\
\displaystyle \sum_{j=1}^7 d_q((3,1,2)^T,v_j) & =  3 \times 2^{1/q}  + 4 \times (2^q+2)^{1/q} +  0 \times 2^{(q+1)/q}\\
\displaystyle \sum_{j=1}^7 d_q((2,3,1)^T,v_j) & =  0 \times 2^{1/q}  + 3 \times (2^q+2)^{1/q} +  2 \times 2^{(q+1)/q}\\
\displaystyle \sum_{j=1}^7 d_q((3,2,1)^T,v_j) & =  3 \times 2^{1/q}  + 2 \times (2^q+2)^{1/q} +  2 \times 2^{(q+1)/q}\\
\end{align}
This sum is minimum for $(2,1,3)^T$ for all $q \geq 1$. Basic considerations as $2^{1/q} \leq (2^q+2)^{1/q} \leq 2^{(q+1)/q}$ mean to the act that only $(1,2,3)^T$ and $(2,1,3)^T$ can minimises the sum. Finally, $(2,1,3)^T$ is the permutation minimizing the sum.

\end{enumerate}

\subsection*{\bf Proof of \cref{prop:monotonie}}

\begin{enumerate}
   \item Consider $\Phi$ a voting situation with a winner $c^*$ with a given voting procedure. Suppose that there exists a voter $v_0$ such that $\Phi(c^*,v_0)= \alpha \neq 1$. Let $c_0$ such that $\Phi(c_0,v_0)=\alpha-1$. 
Consider another voting situation $\widetilde{\Phi}=\Phi$ except that $\widetilde{\Phi}(c^*,v_0)=\alpha-1$ and $\widetilde{\Phi}(c_0,v_0)=\alpha$. The voting procedure satisfies Monotonicity if the winner for $\tilde{\Phi}$ is still $c^*$.

Denote $\Pi_n$ and $\tilde{\Pi}_n$ the empirical distributions associated to the columns of $\Phi$ and $\tilde{\Phi}$ respectively. Without loss of generality, we suppose that the voter for whom we swap the  ranks of $c^*$ and $c_0$ is the voter $v_0=n$. We denote $\sigma_n$ and $\tilde{\sigma}_n$ the associated rankings. Consider $U_{d, \Pi,p}(\cdot)$ the $p$-Fréchet functional associated to the initial votes, and  $U_{d, \tilde{\Pi}_n, p}(\cdot)$ the functional obtained after the swap. Let $\sigma^*$ be a Fréchet mean associated to the distance $d()$ in the initial voting setting. We would like to show that $\sigma^*$ is still a Fréchet mean in the new setting, namely that it minimizes $U_{d, \tilde{\Pi}_n, p}$. 

Let $\sigma\in \S_\dim$. We have 
\begin{align}\label{ineq: mon}
    n U_{d, \tilde{\Pi}_n, p}(\sigma^*)-n U_{d, \tilde{\Pi}_n, p}(\sigma)&=n U_{d, \Pi_n, p}(\sigma^*)-n U_{d,\Pi_n,p}(\sigma)\\    
    &\quad +\bigl( d^p(\sigma^*, \tilde{\sigma}_n)-d^p(\sigma^*,\sigma_n)\bigr)-\bigl( d^p(\sigma,\tilde{\sigma}_n)-d^p(\sigma,\sigma_n)\bigr).\nonumber
\end{align}
We want to prove that $n U_{d, \tilde{\Pi}_n, p}(\sigma^*)-n U_{d, \tilde{\Pi}_n, p}(\sigma)<0$.
We now distinguish with respect to the distance used and the value of $q$.

    Let $p=1$. Suppose that $\sigma^*$ is a Fréchet mean associated to $U_{d_1, \Pi_n, 1}(\cdot)$. Then $U_{d_1, \Pi,1}(\sigma^*)<U_{d_1, \Pi_n, 1}(\sigma)$ for all $\sigma\in \S_\dim\setminus\S^*_{1,d_1,\Pi_n}$.

    First observe that, since the $1$-Minkowski-Hölder distance between two permutations is always even, this implies in particular that  $n U_{d_1, \Pi_n, 1}(\sigma^*)\le n U_{d_1, \Pi_n, 1}(\sigma)-2$.
    
    Secondly, as there is only one swap between ${\sigma}_n$ and $\tilde{\sigma}_n$, with a difference one between the values, we have, for any permutation $\sigma$, 
    $$|d_1(\sigma,\tilde{\sigma}_n)-d_1(\sigma,\sigma_n)|\le 2.$$  
    Replacing this in \eqref{ineq: mon}, we obtain
    $$ n U_{d_1, \tilde{\Pi}_n, 1}(\sigma^*)-n U_{d_1, \tilde{\Pi}_n, 1}(\sigma)\le 
d_1(\sigma^*,\tilde{\sigma}_n)-d_1(\sigma^*,\sigma_n)$$
Lastly, observe that $d(\sigma^{\star},\tilde{\sigma}_n)-d(\sigma^{\star},\sigma_n)$ depends only on the rankings of $c^*$ and $c_0$. Indeed, $d(\sigma^{\star},\tilde{\sigma}_n)-d(\sigma^{\star},\sigma_n)=-2$ if $\sigma^{\star}(c_0)\ge \alpha$ and $0$ otherwise. Thus $U_{d_1, \tilde{\Pi},1}(\sigma^{\star})\le U_{d_1, \tilde{\Pi},1}(\sigma)$ for all $\sigma \in \mathfrak{S}_m\setminus\mathfrak{S}^*_{1,d_1,\Pi}$ and all permutations in $\mathfrak{S}^*_{1,d_1,\Pi}$ are also a Fréchet mean after the vote swap. If, in addition $\sigma^{\star}(c_0)\ge \alpha$ then the inequality is strict implying that $\mathfrak{S}^*_{1,d_1,{\Pi}}\subseteq \mathfrak{S}^*_{1,d_1,\tilde{\Pi}}$ and thus that $c^*$ remains  the unique winner of the election.

\item Consider the following voting preferences: $a=(A,B,C,D,E)$, $b=(D,C,A,B,E)$, $c=(E,D,C,A,B)$. Suppose that we have $n=5$ voters and they vote as follows: $\sigma_1=\sigma_2=a$, $\sigma_3=\sigma_4=b$ and $\sigma_5=c$. Then one can check that the Fréchet mean associated to the consensus ranking for the Hamming distance is $a$ and thus the winner of the election is $A$. Now if the fifth voter changes his rankings, swapping the place of $A$ and $C$, $\tilde{\sigma}_5$ becomes $(E,D,A,C,B)$ and $b$ becomes a Fréchet mean for $\tilde{\Pi}$, leading to a new election winner, $D$.
\end{enumerate}

\subsection*{\bf Proof of \cref{prop:IIA}}

Independence to Losers property:
\begin{enumerate}
    \item Kendall-based consensus deepest voting is Kemeny voting rule, which is known to not satisfy Independence to Losers as all Condorcet methods (see \cite{young1978consistent}).
    
    \item  We want to prove that all deepest voting rules based on $q$-Minkowski-Hölder distance, for all $q\geq 1$, do not satisfy Independence to Losers. 

As a counterexample, let suppose the situation with 2 candidates and 7 voters, with rankings given by \cref{tab:iia_minkowski_1}.

\begin{minipage}{\linewidth}
\centering
\begin{tabular}{lccccccc} \toprule
& $v_1$ & $v_2$ & $v_3$ & $v_4$ & $v_5 $& $v_6$ & $v_7$ \\ \midrule
 $c_1 $  & 1 & 1 & 1 & 1 & 2 & 2 & 2 \\
$c_2 $ & 2 & 2 & 2 & 2 & 1 & 1 & 1 \\ \bottomrule
\end{tabular}
\captionof{table}{Counter-example for the proof that voting rules based on a Minkowski-Hölder distance do not satisfy Independence to Losers. Case 1, with 2 candidates.}
\label{tab:iia_minkowski_1}
\end{minipage}

It is obvious that for all $q \geq 1$, the deepest permutation is $(1,2)^\top$, and therefore candidate $c_1$ is the winner of the deepest voting with $q$-Minkowski-Hölder distance.

Let introduce a candidate $c_3$ such that the preferences are now the ones in \cref{tab:iia_minkowski_2}. Note that the pairwise comparisons between $c_1$ and $c_2$ do not change.

\begin{minipage}{\linewidth}
\centering
\begin{tabular}{lccccccc} \hline
& $v_1$ & $v_2$ & $v_3$ & $v_4$ & $v_5 $& $v_6$ & $v_7$ \\ \hline
 $c_1 $  & 1 & 1 & 2 & 2 & 3 & 3 & 3 \\
$c_2 $ & 2 & 2 & 3 & 3 & 1 & 1 & 1 \\ 
$c_3 $ & 3 & 3 & 1 & 1 & 2 & 2 & 2 \\ \hline
\end{tabular}
\captionof{table}{Counter-example for the proof that voting rules based on a Minkowski-Hölder distance do not satisfy Independence to Losers. Case 2, with 3 candidates. The voting situation for the first 2 candidates is similar to Case 1 in \cref{tab:iia_minkowski_1}.}
\label{tab:iia_minkowski_2}
\end{minipage}

It is easy to check that for all $q\geq 1$ (calculus are left to the reader) the deepest permutation is $(3,1,2)^\top$ and therefore $c_2$ is the deepest winner of the election. Introducing $c_3$ changes the winner from $c_1$ to $c_2$, which proves that Independence to Losers is not satisfied.

\item To prove that Hamming-based deepest voting do not satisfy Independence to Losers, the same counterexample of the proof of Minkowski-Holder-based deepest voting can be considered. 

\item The same counterexample can be used for the proof with Cayley-based deepest voting, even if the winner is not unique in this case.

\end{enumerate}

\end{document}